\begin{document}

\title{Eyes on the Prize: Improving Biological Surface Registration via Forward Propagation
}


\author{Robert J. Ravier
}


\institute{Robert J. Ravier \at
              Department of Electrical and Computer Engineering \\
              Duke University \\
              \email{robert.ravier@duke.edu}           
}

\date{Received: date / Accepted: date}

\maketitle

\begin{abstract}
Many algorithms for surface registration risk producing significant errors if surfaces are significantly nonisometric. Manifold learning has been shown to be effective at improving registration quality, using information from an entire collection of surfaces to correct issues present in pairwise registrations. These methods, however, are not robust to changes in the collection of surfaces, or do not produce accurate registrations at a resolution high enough for subsequent downstream analysis. We propose a novel algorithm for efficiently registering such collections given initial correspondences with varying degrees of accuracy. By combining the initial information with recent developments in manifold learning, we employ a simple metric condition to construct a measure on the space of correspondences between any pair of shapes in our collection, which we then use to distill soft correspondences. We demonstrate that this measure can improve correspondence accuracy between feature points compared to currently employed, less robust methods on a diverse dataset of surfaces from evolutionary biology. We then show how our methods can be used, in combination with recent sampling and interpolation methods, to compute accurate and consistent homeomorphisms between surfaces.
\keywords{Shape registration \and Shape correspondence \and Manifold learning \and Parallel transport \and Biological Imaging}
\end{abstract}

\section{Introduction}\label{sec:intro}
Shape registration is a fundamental problem in computer vision. Accurate registrations are required for many applications, such as object retrieval, statistical shape analysis, medical imaging, among others \cite{guo20143d, dryden2016statistical, styner2006framework, tam2012registration}. A plethora of algorithms for registering pairs of shapes have been proposed, many of which can be directly formulated as minimizing some well-defined objective function (see \cite{joshi2000landmark, myronenko2010point, van2011survey, ovsjanikov2012functional, lipman2013continuous}) for a small sample of such works. For practical applications involving collections of shapes, it is often necessary to have a {\emph{consistent}} set of registrations: given shapes $\{S_{i}\}_{i \in \mathcal{I}},$ the collection of registrations $\{\phi_{ji}\}_{i,j \in \mathcal{I}}$ is consistent if $\phi_{ik} \circ \phi_{kj} \circ \phi_{ji} = \text{Id}_{S_{i}}$ for all $i,j,k \in \mathcal{I},$ where $\text{Id}_{S_{i}}$ is the identity map on $S_{i}.$ Consistency is a requirement for posthoc statistical analysis \cite{lorenzi2013geodesics}, and is generally obtained by a refinement of an initially computed collection of registrations, either through optimization \cite{huang2013consistent} or by directly defining a consistent collection based on composition \cite{aigerman2016hyperbolic, boyer2015new, puente2013distances}.

\begin{figure}[h]
\begin{center}
\includegraphics[scale=.38]{./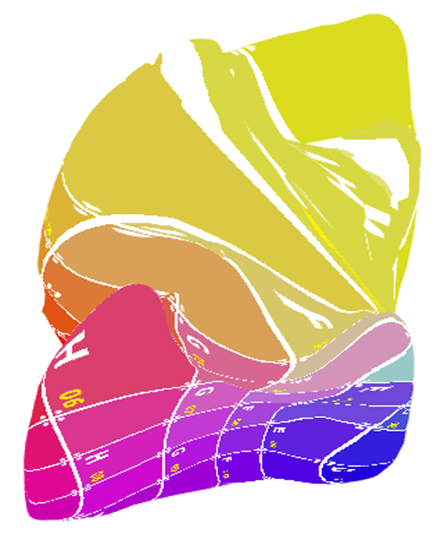}
\includegraphics[scale=.38]{./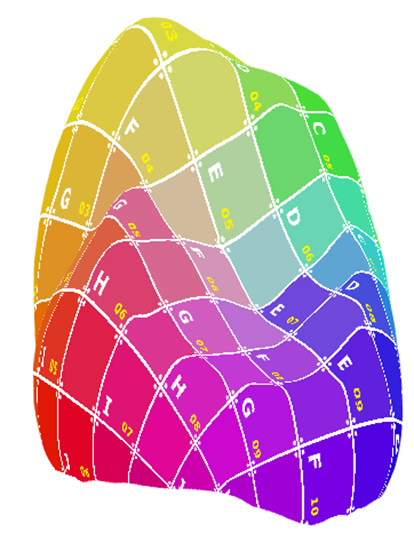}
\end{center}
\caption{A poor quality texture map between two primate teeth from the dataset in \cite{boyer2011algorithms} computed from the method in \cite{lipman2013continuous}.}
 \label{fig:badTextures}
 \vspace{-0.2in}
\end{figure}

Anatomical surfaces studied in evolutionary biology, the shapes that serve as the main motivation for this work, have historically proven to be difficult for algorithms to accurately register, especially if one is interested in homeomorphism. In contrast to those considered in medical biology, surfaces considered in evolutionary biology often consist of multiple species \cite{boyer2011algorithms}, which in turn often leads to high amounts of nonisometry. Though registration of nonisometric shapes is of increasing research interest \cite{pons2015dyna, rodola2017partial, schonsheck2018nonisometric, dyke2019shrec}, the mechanisms of nonisometry studied in recent works are generally distinct from those in evolutionary biology. In this setting, the main difficulty stems from both the lack of complexity of shape compared to those often considered, potential differences in the amount and location of regions with high curvature, as well as the potential type of curvature itself.  Figure~\ref{fig:badTextures} shows an example of a failure of the algorithm in \cite{lipman2013continuous} to register two primate molars from \cite{boyer2011algorithms}, for which ground truth registrations are available. In this case, the nonisometry between the two surfaces, occurring in the yellow regions of both texture maps, has resulted in a failure to find a good registration, as represented by the extreme stretching on the left textured surface. 

The failure in Figure~\ref{fig:badTextures} is perhaps unsurprising to the reader: the geometries of the top portions of both surfaces are quite distinct in the sense that some of the features on the left surface do not have obvious counterparts on the right surface. Though this stark difference naturally begs the question as to whether a meaningful correspondence is feasible to compute without significant uncertainty, we again recall from the previous paragraph that {\emph{well-accepted ground truth correspondences exist.}} \cite{boyer2011algorithms}. Correspondences for this example and other datasets within evolutionary biology are determined by experts manually selecting such features (termed {\emph{landmarks}} \cite{roth1993three, bookstein1997morphometric}) based on both geometric and non geometric biological information from the collection as a whole. Corresponding landmarks on pairs of disparate shapes are often determined by composing correspondences along paths of intermediates, where the degree of nonisometry between any two intermediates is smaller than those of the initial shapes of interest. This procedure matches recent findings within the registration literature \cite{puente2013distances, boyer2015new, pons2015dyna, bogo2017dynamic, gao2018development}, which suggest that the accuracy of registrations (and, in the case of \cite{puente2013distances, boyer2015new}, rigid alignments) between disparate pairs of shapes can be improved by using compositions of registrations between similar shapes. Here, the methods in \cite{puente2013distances, boyer2015new, gao2018development} define similar by an explicit notion of distance, while the methods in \cite{pons2015dyna, bogo2017dynamic} use time difference as a similarity measure, appropriate as the data of interest consists of motion capture of humans. 

All methods mentioned in the previous paragraph yield consistent registrations amongst all shapes in a collection; this is due to explicit choices of particular sequences of intermediate for each pair of shapes that result in consistency. As noted in the beginning of the section, this is not strictly necessary: the method proposed in \cite{huang2013consistent} allows one to obtain consistent registrations from an initial set of pairwise ones by means of solving an optimization problem. We note that the methodology proposed in \cite{huang2013consistent} makes the intrinsic and completely reasonable assumption that the initial pairwise registrations are of sufficient quality. For disparate pairs of shapes in our setting, this requires a sequence of intermediates over which to compute and compose registrations per experimental evidence. Though one could use additional information not immediately present within the collection of shapes of interest, it is likely that such information already results in consistent registrations, defeating the point of the optimization method in this setting.

\begin{figure*}[h]
\begin{center}
\includegraphics[width=0.45\textwidth]{./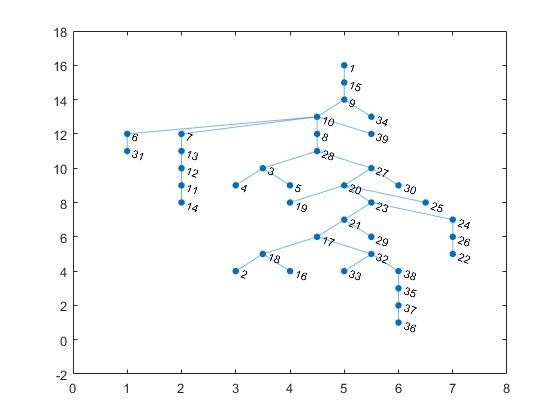}
\hfill
\includegraphics[width=0.45\textwidth]{./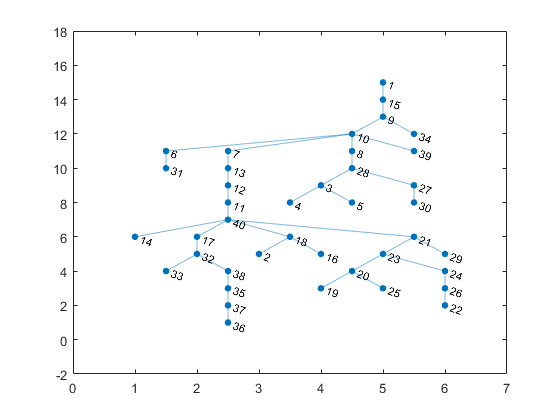}
\end{center}
\caption{Two minimum spanning trees used for consistent registration of shapes from \cite{boyer2011algorithms}. Left: the topology of the MST for 39 shapes with respect to the continuous Procrustes distance \cite{lipman2013continuous}. Right: the topology of the MST for a collection of 40 shapes consisting of the previous 39 with an additional one added.}
 \label{fig:MSTBad}
 \vspace{-0.2in}
\end{figure*}

The distance-based methods for achieving consistent registrations  \cite{puente2013distances, boyer2015new, gao2018development} may still not yield ones that are of sufficient quality between disparate shapes. Per the analysis of \cite{gao2018development}, accuracy of registration was observed to need not perfectly correlate with composition over small distances. Furthermore, there are few stability guarantees in this setting: for example, minimum spanning trees, which are used in \cite{puente2013distances, boyer2015new} as a means of determining consistent registrations, may wildly change topology if an additional sample is added or removed. An example of this can be found in Figure~\ref{fig:MSTBad}, meaning that incorporation of additional samples for analysis can cause monumental issues in registration. Addressing this issue, namely obtaining consistent, high-quality registrations in the nonisometric scenario, is the focus of our work.

An obvious way to solve this would be to improve the quality of all registrations, or at enough to generate a consistent collection via some heuristic. Based on the aforementioned experimental evidence, it is thus natural to define an optimization problem {\emph{for each pair of shapes}} in the collection, where the domain of the objective is over the space of all possible sequences of intermediates. This is naturally posed within the framework of combinatorial optimization, for which many solution techniques exist. The main problem is the choice of objective function, which is extremely nontrivial. Though one could minimize an energy functional that depends on a given pair of shapes and the registration between them, such as the continuous Procrustes distance \cite{lipman2013continuous}, as previously mentioned, smaller values of such energies do not perfectly correlate with increased accuracy of registration \cite{gao2018development}. At the same time, it is also infeasible to use direct accuracy measures as an objective function, as this implies it is known a priori, i.e. the problem is already solved. Thus, even in a quite general framework, we apparently stuck in a catch-22. The goal of this paper is to propose a mathematically principled solution to this problem by eschewing the language of optimization in favor of that of differential geometry and manifold learning.

In this paper, we propose a novel solution to the problem of registering disparate shapes in a collection. Rather than directly optimizing an objective function, we instead consider a surrogate problem of computing {\emph{parallel transports along geodesics of a manifold.}} Geodesic parallel transports have been previously considered in the realm of registration, most notably for medical shape analysis as a means of transforming data to a common template on which to perform statistical analysis \cite{lorenzi2013geodesics, pennec2019riemannian}. We will instead based our approach off recent advances within the realms of manifold learning \cite{gao2015hypoelliptic, gao2019diffusion}, in which registrations themselves are modeled as parallel transports on a {\emph{fiber bundle}}. We will use this model to significantly refine our search space to registrations induced by {\emph{eyes on the prize}} flows (EOP flows) within the shape space. This allows us to naturally arrive at a robust solution to our registration that is surprisingly simple to employ in practice. Namely, we will detail an algorithm, EOP-PM, that uses all information in the collection to conservatively determine which points on any two shapes are likely in correspondence, and show how one can adapt this to naturally get consistent registration of a wide class of surfaces. This is done by employing what we call the {\emph{approximately cycle-consistent condition}}, or ACC condition, a constraint in which registrations are only made if they are very likely with respect to those corresponding to EOP flows.

The remainder of the paper is organized as follows. In Section~\ref{sec:prelim}, we go over necessary preliminaries as well as review related work. In Section~\ref{sec:fiber}, we investigate our parallel transport model in order to derive simple constraints on which to restrict our search space. In Section~\ref{sec:registration}, we use the discussion from Section~\ref{sec:fiber} to detail a novel algorithm for registering disparate shapes given information from an entire collection, which we then adapt to compute consistent registrations in Section~\ref{sec:consistent}. We make concluding remarks in Section~\ref{sec:conclusion}

\section{Preliminaries}\label{sec:prelim}
\subsection{Registration}
We denote by $\mathcal{S}$ a collection of shapes $S_{1},...,S_{n}.$ For the purposes of our paper, we will be concerned with homeomorphic triangular meshes in $\mathbb{R}^{3},$ i.e. $S_{i} = (V_{i},F_{i}),$ where $V_{i}$ and $F_{i}$ are the set of vertices and faces respectively. Though all shapes we work with are homeomorphic to either the unit disk or the unit sphere, the ideas naturally generalize to other topologies. The astute reader will observe that many of our proposed algorithms and their implementation depend only on the point cloud structure; the mesh structure is used for initial computations as needed as well as for reparametrization as discussed in Section~\ref{sec:consistent}.

We denote by $\mathcal{C}_{\mathcal{S}}$ a collection of precomputed pairwise registrations $f_{ji}: S_{i} \to S_{j},$ with $f_{ii} = \text{Id}_{S_{i}}.$ We do not make any assumptions a priori as to the consistency or invertibility of the registrations in $\mathcal{C}_{\mathcal{S}}.$ We only assume that correspondences are everywhere defined so that an arbitrary composition is meaningful. The collection $\mathcal{C}_{\mathcal{S}}$ can be used to define notions of dissimilarities between two shapes. For the purposes of this work, we focus on dissimilarities of the following form:

\begin{equation}\label{eqn:CPD}
\begin{aligned}
d_{\mathcal{C}_{\mathcal{S}}}(S_{i},S_{j}) &= \frac{1}{2}   \left( \int_{S_{i}} g(x,f_{ji}(x))^{2} \, dS_{i} \right)^{\frac{1}{2}} \\ 
&+ \frac{1}{2} \left( \int_{S_{j}} g(x,f_{ij}(x))^{2} \, dS_{j}\right)^{\frac{1}{2}}
\end{aligned}
\end{equation}

\noindent where $g \geq 0$ is a real-valued dissimilarity, and both integrals are with respect to the usual area measures on each shape. Note that $d_{\mathcal{C}_{\mathcal{S}}}$ is nonnegative and symmetric by definition. In many cases, functionals of forms similar to that of Equation~\eqref{eqn:CPD} can be used to obtain the initial registrations in $\mathcal{C}_{\mathcal{S}}$ by an optimization procedure; see, for example, \cite{koehl2013automatic,lipman2013continuous}. To simplify notation, we will often write $d_{ij}$ in place of $d_{\mathcal{C}_{\mathcal{S}}}(S_{i},S_{j}).$ For the remainder of the work, we will assume that the $d_{\mathcal{C}_{\mathcal{S}}}$ is an actual distance function, i.e. satisfies a triangle inequality. We do not view this as a huge restriction, as one can always obtain a distance from a general class of similarity scores via diffusion maps \cite{coifman2005geometric, coifman2006diffusion}. 

With a notion of distance, we can naturally represent the collection of shapes $\mathcal{S}$ with registrations $\mathcal{C}_{\mathcal{S}}$ as a weighted complete directed graph $\Gamma(\mathcal{S},\mathcal{C}_{\mathcal{S}}),$ or $\Gamma$ for short. Each vertex $v_{i} \in \Gamma$ corresponds to the shape $S_{i}$ in our collection, and each directed edge from $v_{i}$ to  $v_{j}$ has as a weight the formal pair $(w_{ij},f_{ji}) = (K_{\sigma}(d_{ij}),f_{ji}),$ where $K_{\sigma}$ is a nonnegative kernel, such as the Gaussian kernel $K_{\sigma}(d_{ij}) = e^{- \frac{d_{ij}^{2}}{\sigma}}.$ This particular representation has been extensively studied in the bundle diffusion maps literature (e.g. \cite{singer2012vector, gao2019diffusion}) as a way to analyze datasets in which any two datapoints can be related to one another by means of an {\emph{$G$-action}}, where $G$ is usually a group ($O(2)$ in \cite{singer2012vector}) or groupoid (registrations in \cite{gao2019diffusion}). We observe that our choice of weights on directed edges naturally yields an analogous formal pair of weights for all {\emph{paths}} between any two vertices on $\Gamma.$ More precisely, an arbitrary sequence of shapes $S_{i_{0}},...,S_{i_{n}}$ naturally corresponds to the formal pair 

\begin{equation}\label{eqn:pathWt}
\left( \prod_{k=0}^{n-1} w_{(i_{k+1})i_{k}}, f_{i_{n}i_{n-1}} \circ ... \circ f_{i_{1}i_{0}} \right) 
\end{equation}

\noindent where $\circ$ denotes composition. We will often shorten this to $f_{\gamma},$ for $\gamma$ a path in $\Gamma.$

\subsection{Differential Geometry and Fiber Bundles}

To properly develop the methods proposed in this paper, we require some language from differential geometry that has not traditionally been used in applications, but has been recently proposed in \cite{gao2019diffusion}. We will model our collection of data as a {\emph{fiber bundle}}, whose definition we take from \cite{michor2008topics}.

\begin{definition}\label{def:bundle}
A {\bf{fiber bundle}} is a formal 4-tuple $(E,B,F,\pi),$ where $E, B,$ and $F$ are smooth manifolds, and $\pi: E \to B$ is a smooth mapping such that each $b \in B$ has an open neighborhood $U_{b}$ such that $\pi^{-1}(U_{b})$ is diffeomorphic to $U_{b} \times F.$ Here, $E$ is known as the {\emph{total manifold}}, $B$ is known as the {\emph{base manifold}}, and $F$ is known as the {\emph{fiber}}.
\end{definition}

In other words, a fiber bundle is locally a product manifold. Every $b \in B$ has a corresponding fiber $F_{b}$ that is diffeomorphic to $F.$ In our setting, we model our collection of shapes as fibers in a fiber bundle , with shape $S_{i}$ having corresponding point $b_{i}$ in the base manifold (analogous to their corresponding vertices $v_{i}$ in $\Gamma.$) The distance between $b_{i}$ and $b_{j}$ is thus given by $d_{ij}.$ 

The fiber bundle framework allows us to model registrations as {\emph{parallel transports}}, which lie at the heart of our method. In Riemannian geometry, a smooth curve $\gamma$ between two points $p$ and $q$ on a Riemannian manifold $M$ naturally induces a parallel transport operator $P_{\gamma}: T_{p}M \to T_{q}M.$ Parallel transports in this setting have been used in medical imaging \cite{lorenzi2013geodesics} to align vector field deformations of images to a common template. Our use of meshes does not allow for any direct adaptation of such methods to our setting, e.g. fiber bundles. We reserve the discussion of full details for the Appendix. The key fact that we require is a basic consequence of the defintion, which we summarize here: parallel transports on the base manifold naturally induce diffeomorphisms of the fibers \cite{michor2008topics}. This leads us to our ability to mathematically model computed correspondences as parallel transports. 

\begin{figure}[t]
\begin{center}
\centering
\includegraphics[width = 0.3\textwidth]{./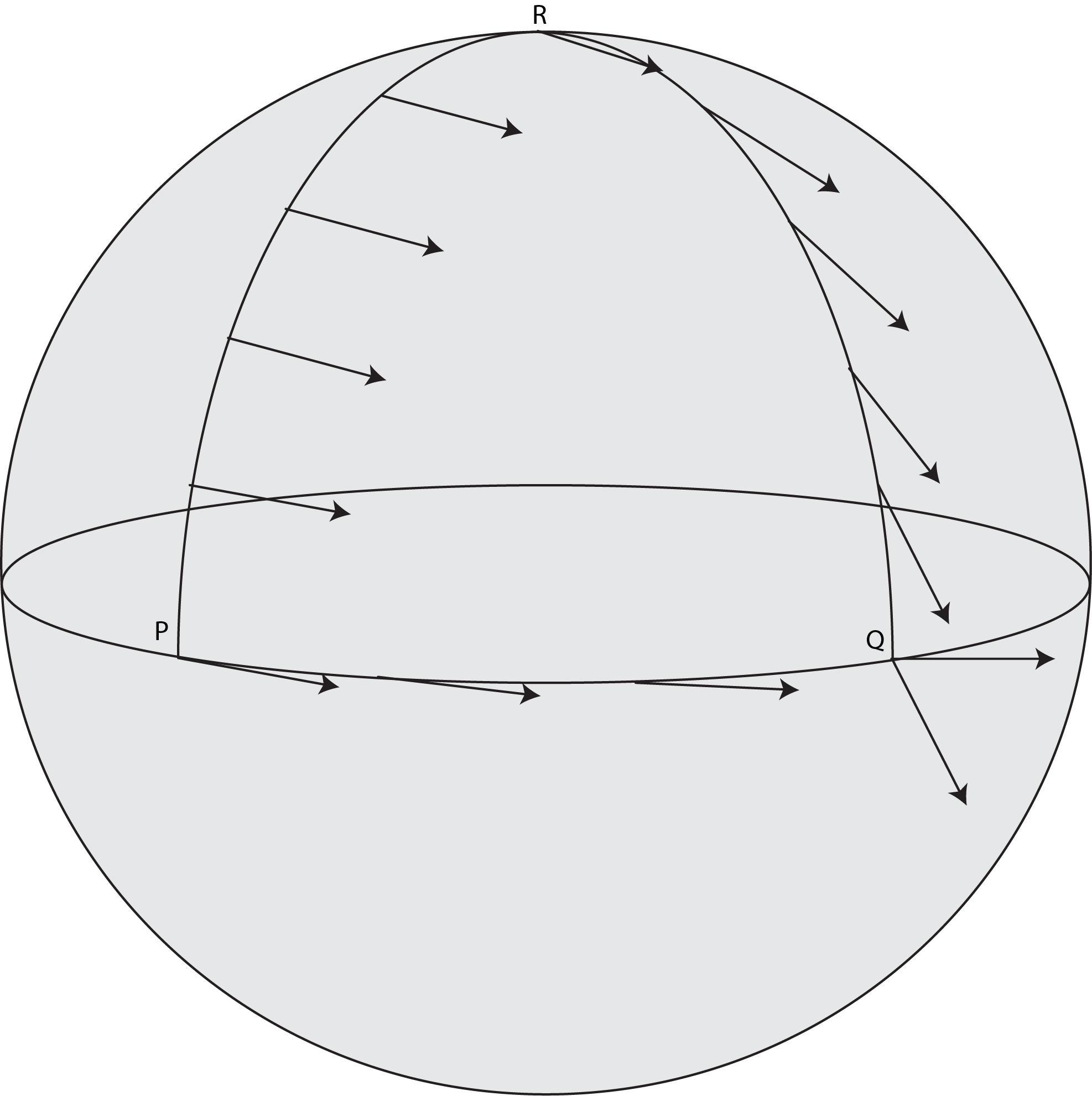}
\end{center}
\caption{Two parallel transports of a tangent vector at a point $P$ on a manifold. One parallel transport corresponds to the geodesic between $P$ and $Q,$ whereas the other corresponds to composing the geodesic from $P$ to $R$ with that of $R$ to $Q.$}\label{fig:ParallelTransport}
\end{figure}

The utility of this modeling mechanism can be seen through {\emph{holonomy}}. The precise definitions are out of scope for this work, but we illustrate the intuitive idea in Figure~\ref{fig:ParallelTransport}, which shows an example of the Riemannian manifold setting; we refer the interested reader to recent work on synchronization for more thorough discussions\cite{gao2019geometry}. Here, we see that different paths result in different parallel transports of tangent vectors. Since parallel transports are registrations in our fiber bundle setting, this corresponds to inconsistency of a collection of registrations. A natural question one can ask is whether it is possible to determine a collection of paths for which the parallel transports are reasonably close to one another; this is central to our methodology. 
%
\section{Finding Meaningful Paths for Parallel Transport}\label{sec:fiber}
Though our ultimate interest is in obtaining consistent registrations, we first concern ourself with registration quality. For each pair of shapes $(S_{i},S_{j})$ our assumed collection of registrations naturally yield an infinite number of registrations between $S_{i}$ and $S_{j}.$ Indeed, every composition of the form $f_{jk_{n}} \circ ... \circ f_{k_{0}i}$ is a valid registration. It is thus natural to ask if there is any proxy to estimate the quality of such registrations without requiring ground truth information. In this section, we detail such a condition motivated by parallel transport.

We first immediately note that the infinite number of registrations can be trimmed to a finite (but possibly large) subset without issue. Indeed, since each such registration corresponds to a path $\gamma$ in $\Gamma,$ we can consider two different classes of paths: ones which contain loops, and ones which do not. The collection of paths that do not contain loops is finite as $\Gamma$ only has a finite number of vertices, meaning that the collection containing loops is infinite. Recall that we assumed that the collection of registrations is not consistent. Thus, if $\tilde{\gamma}$ is a loop in $\Gamma,$ the corresponding registration $f_{\tilde{\gamma}}$ is not necessarily the identity map; if the collection was consistent, then $f_{\tilde{\gamma}}$ would necessarily be the identity. This means that loops function only as error perturbations; each loop only adds error to a registration corresponding to a simple path without loops. 

We can thus safely remove all registrations with loops from consideration, leaving only a finite set left. Nevertheless, this finite set is potentially large as the number of simple paths in a graph grows factorially with the number of vertices, potentially making it prohibitively large to work with. Though one could proceed via sampling, it is natural to ask if we can further reduce the collection under consideration. A common assumption within the general area of shape analysis is that good registrations can be obtained from geodesics in shape space \cite{klassen2004analysis, lorenzi2013geodesics}. If we keep this assumption, then a reasonable way forward is to restrict the collection of paths of interest to those whose parallel transports are close to those of geodesics. A particular way of measuring this is given by the following result, which is an exercise in \cite{chavel2006riemannian}.

\begin{theorem} \label{thm:EOPEstimate} Let $H(s,t)$ be an arbitrary homotopy between two smooth curves $\gamma_{0}(t):=H(0,t)$ and $\gamma_{1}(t):=H(1,t)$ that do not intersect except at endpoints $p:=H(s,0)$ and $q:=H(s,1).$ Let $X \in T_{p}M$ be a unit vector, and let $X_{s}$ be the vector field defined by parallel transporting $X$ along $\gamma_{s}(t):=H(s,t)$ for $s$ fixed. Then

\begin{equation}\label{eqn:EOPEstimate}
||X_{1}(1)-X_{0}(1)|| \leq \frac{4}{3} K_{max}A(H)
\end{equation}

\noindent where $A(H)$ is the area of the graph of the homotopy and $K_{max}$ is the maximum of the absolute Gaussian curvature of the graph of the homotopy.
\end{theorem}

In other words, the difference between the parallel transport of a geodesic and a curve homotopic to a geodesic is a function of the curvature of the manifold and the area of \emph{any} homotopy. Though we do not necessarily have control over the curvature of a given shape space, it is possible using metric information to control the area term. To this end, we make a definition.

\begin{definition} \label{def:eyesOnPrize}
For vertices $v_{i}$ and $v_{j}$ in a connected weighted graph $\Gamma = (V,E)$ with edge weights given by a matrix $D = (d_{ij})$ with $d_{ij} = d(v_{i},v_{j})$ for some metric $d,$ we define the {\bf{Eyes on the Prize (EOP) matrix between $v_{i}$ and $v_{j}$}}, denoted by $F_{i \to j}^{\Gamma},$ is the $|V| \times |V|$ binary matrix defined by

\begin{equation}
(F_{i \to j}^{\Gamma})_{mn}:=  \begin{cases} 1 & \text{if } d_{ij} < d_{in} \emph{\text{ and }} d_{jm} > d_{jn} \\0 & \emph{\text{otherwise}} \end{cases}
\end{equation}
\end{definition} 

\noindent The EOP name comes purely from intuition: a path on this adjacency matrix is simultaneously moving towards its target and can never move back towards its source, e.g. ''its eyes are always on the prize." The directed paths corresponding to the adjacency matrix $F_{i \to j}^{\Gamma}$ are those that both always move away from $v_{i}$ and always move towards $v_{j}.$ We will refer to these paths as {\emph{EOP directed flows}}.From the perspective of Theorem~\ref{thm:EOPEstimate}, the efficient movement of the paths of $F_{i \to j}^{\Gamma}$ limits the area of the minimizing homotopy of Equation~\eqref{eqn:EOPEstimate}.

We conclude this section by briefly remarking as to potential theoretical guarantees of paths from the $F_{i \to j}^{\Gamma}.$ Given the minimal information assumed, it is difficult to derive explicit estimates of Equation~\eqref{eqn:EOPEstimate}, as one requires information about the underlying shape space that is not necessarily available a priori in order to derive such estimates. Nevertheless, it is possible to conclude the following:

\begin{theorem}\label{thm:EOPFunction}
Let $x$ and $y$ be two points in $M$ with minimizing geodesic $\gamma.$ Let $C_{x,y}^{\varepsilon}$ be the collection of unit speed $C^{1}$ curves $s$ in $M$ that satisfy the following conditions

\begin{align*}
g \left(s'(t),v_{x,s(t)} \right)  > \varepsilon \\
g \left(s'(t),v_{s(t),y} \right) > \varepsilon
\end{align*}

\noindent where $1 > \varepsilon > 0$ and $v_{x,s(t)}$ and $v_{s(t),y}$ are the initial condition vectors for the unit speed geodesic between $x$ and $s(t),$ and $s(t)$ and $y$ respectively.  Then, given a Fermi coordinate neighborhood of $\gamma,$ there is a subneighborhood such that every path in $C_{x,y}^{\varepsilon}$ whose image is contained in the subneighborhood is necessarily a function of the coordinate in the neighborhood corresponding to the geodesic.
\end{theorem}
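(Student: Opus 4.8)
The plan is to reduce the statement to a monotonicity assertion in Fermi coordinates and then read it off from the two \emph{eyes on the prize} conditions. First I would parametrize $\gamma$ by arclength on $[0,L]$, $L=d(x,y)$, with $\gamma(0)=x$ and $\gamma(L)=y$, and build a Fermi normal coordinate chart $(y_1,\dots,y_m)$ along $\gamma$ from a parallel orthonormal frame, valid on a tube $U\supseteq\gamma$; in this chart $\gamma$ is the segment $\{y_2=\dots=y_m=0\}$, $x$ is the origin, $y=(L,0,\dots,0)$, the coordinate ``corresponding to the geodesic'' is $y_1$, and $g_{ij}=\delta_{ij}$ along $\gamma$, so in particular $\operatorname{grad}y_1=\partial_{y_1}=\gamma'$ on $\gamma$. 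With this setup the conclusion becomes: \emph{for a sufficiently small subtube $U'\subseteq U$, every $s\in C_{x,y}^{\varepsilon}$ with image in $U'$ has $y_1\circ s$ strictly monotone}. That suffices, because $s$ is $C^1$, so a nowhere-vanishing $(y_1\circ s)'$ lets me reparametrize by $y_1$ via the inverse function theorem and exhibit the image of $s$ as a $C^1$ graph $y_1\mapsto(\sigma_2(y_1),\dots,\sigma_m(y_1))$, i.e.\ $s$ is a function of $y_1$. I would also restate the two hypotheses: writing $V_x:=\operatorname{grad}d(x,\cdot)$ and $V_y:=-\operatorname{grad}d(y,\cdot)$ for the unit fields pointing away from $x$ and towards $y$ along minimizing geodesics, they say precisely that $g(s'(t),V_x)=\tfrac{d}{dt}d(x,s(t))>\varepsilon$ and $g(s'(t),V_y)=-\tfrac{d}{dt}d(y,s(t))>\varepsilon$ for all $t$.

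\noindent\textbf{Main step.}
Next I would use that $\gamma$ minimizes: the minimizing geodesic from $x$ to $\gamma(t)$ is $\gamma|_{[0,t]}$ and the one from $y$ to $\gamma(t)$ is the reversal of $\gamma$, so $V_x=V_y=\partial_{y_1}=\operatorname{grad}y_1$ everywhere on $\gamma$. The fields $V_x$ and $\operatorname{grad}y_1$ are continuous on a neighborhood of $\gamma\cap\{y_1\ge L/3\}$, and $V_y$, $\operatorname{grad}y_1$ on a neighborhood of $\gamma\cap\{y_1\le 2L/3\}$; since all three agree on $\gamma$, I can shrink $U$ to $U'$ so that
\[
|V_x-\operatorname{grad}y_1|_g<\tfrac{\varepsilon}{4}\ \text{ on }\ U'\cap\{y_1\ge\tfrac{L}{3}\},\qquad |V_y-\operatorname{grad}y_1|_g<\tfrac{\varepsilon}{4}\ \text{ on }\ U'\cap\{y_1\le\tfrac{2L}{3}\}.
\]
Then for $s\in C_{x,y}^{\varepsilon}$ with image in $U'$ and any $t$, at least one of $y_1(s(t))\ge L/3$, $y_1(s(t))\le 2L/3$ holds; taking $V_\bullet:=V_x$ in the first case and $V_\bullet:=V_y$ in the second, Cauchy--Schwarz and $|s'(t)|_g=1$ give
\[
\tfrac{d}{dt}\big(y_1\circ s\big)(t)=dy_1(s'(t))=g\big(s'(t),\operatorname{grad}y_1\big)\ \ge\ g\big(s'(t),V_\bullet\big)-|V_\bullet-\operatorname{grad}y_1|_g\ >\ \varepsilon-\tfrac{\varepsilon}{4}=\tfrac{3\varepsilon}{4}.
\]
Hence $y_1\circ s$ is strictly increasing (derivative at least $3\varepsilon/4$), and the inverse-function-theorem argument above concludes. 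Phrasing the estimate through $\operatorname{grad}y_1$ rather than $\partial_{y_1}$ is what makes $g(s',\operatorname{grad}y_1)=dy_1(s')$ an identity, so I never need to control the off-diagonal metric coefficients --- only the continuity of the three vector fields.

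\noindent\textbf{Where the difficulty lies.}
I expect the real obstacle to be the endpoints, and this is exactly where \emph{both} eyes-on-the-prize conditions are needed rather than one. Near $x$, the field $V_x=\operatorname{grad}d(x,\cdot)$ is the radial field emanating from $x$: it is singular at $x$ and is \emph{not} uniformly close to $\partial_{y_1}$ on any full neighborhood of $x$, so the first inequality alone cannot keep $y_1\circ s$ increasing there (a curve could satisfy $g(s',V_x)>\varepsilon$ while moving almost purely in a normal direction). The second inequality saves the day, because $y$ is far away and $V_y$ \emph{is} uniformly close to $\partial_{y_1}$ throughout a neighborhood of $x$; symmetrically, near $y$ one must lean on the first inequality. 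The split into $\{y_1\ge L/3\}$ and $\{y_1\le 2L/3\}$ above is precisely this dovetailing, and getting the constants to line up is the one place where care is needed. A minor accompanying technicality is a cut-locus issue: I need $d(x,\cdot)$ (resp.\ $d(y,\cdot)$) smooth on a neighborhood of $\gamma\cap\{y_1\ge L/3\}$ (resp.\ $\gamma\cap\{y_1\le 2L/3\}$), including its far endpoint $y$ (resp.\ $x$). Interior points of a minimizing geodesic always avoid the cut locus, and at the endpoints this holds whenever, say, $x$ and $y$ are non-conjugate along $\gamma$ and $\gamma$ is the unique minimizer between them --- an assumption I would make (or remove by a routine approximation).
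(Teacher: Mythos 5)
Your proposal is correct and follows essentially the same route as the paper's own (much terser) argument: cover the Fermi tube by two overlapping regions, on each of which one of the fields $v_{x,\cdot}$ or $v_{\cdot,y}$ is uniformly close to the geodesic coordinate direction, combine this with the two eyes-on-the-prize inequalities to force strict monotonicity of the geodesic coordinate along $s$, and conclude with the inverse/implicit function theorem. Your version is in fact more careful than the paper's sketch — the explicit $\varepsilon/4$ closeness, the use of $\operatorname{grad}y_1$ to make $g(s',\operatorname{grad}y_1)=(y_1\circ s)'$ exact, and the cut-locus caveat at the endpoints are all refinements the paper glosses over.
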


\noindent All details are deferred to the Appendix. To summarize, there is a choice of coordinates for which the more efficient paths in $F_{i \to j}^{\Gamma}$ can necessarily be written as functions of one coordinate. Given a specific model of a shape space, one can then derive estimates of the bounds in Equation~\eqref{eqn:EOPEstimate}. As we are explicitly attempting to not reference a particular model of shape space, we defer such discussions for future work. 

\section{Improving Correspondence Quality via EOP Flows}\label{sec:registration}
\begin{algorithm} 
\caption{Eyes on the Prize Soft Registration (EOP-SR)}
\label{alg:softCorrespondenceAlg}
\begin{algorithmic}

\State {Input: Shapes $S_{1},...,S_{k}$; distances $(d_{ji})$; registrations $\{f_{mn}\}$; threshold $\lambda$}
\State {Compute: $W:= (w_{mn}) = (K(d_{mn}))$ }
\For{$i \in \{1,...,k\}$} 
	\For{$j \in \{1,...,k\}$}
		\If{$i \neq j$} 
			\State{Compute $F_{i \to j}^{\Gamma}$}
			\State{$WF_{i \to j}^{\Gamma} = W \odot F_{i \to j}^{\Gamma}$}
			\State{$f^{s}_{ji} =$ Modified\_DFS$(\{f_{mn}\},WF_{i \to j}^{\Gamma},\lambda)$}
		\EndIf
	\EndFor
\EndFor
\State {Output: Soft correspondences $\{f^{s}_{mn}\}$}

\end{algorithmic}
\end{algorithm}
Algorithm~\ref{alg:softCorrespondenceAlg} details EOP-SR, one of the main contributions of our paper. The algorithm seeks to distill \emph{soft registrations} from an initial collection of registrations and distances \cite{solomon2012soft}. For each pair of shapes, the EOP matrix from Definition~\ref{def:eyesOnPrize} is first computed. From that, we construct the corresponding a weighted version of the EOP matrix, done by taking the Hadamard (pointwise) product with the weight matrix $W = (w_{ij}),$ where $w_{ij} = K(d_{ij})$ for some kernel function $K.$ The soft registrations are then constructed via a depth first search approach that keeps track of both the composed correspondences as well as the weights. Recall from Equation\~eqref{eqn:pathWt} that the weight of a path is the product of the weights of the edges. The hard threshold $\lambda$ is user-defined: such thresholds could include the number of edges used in a given path (where paths with too many edges are omitted), or path weight (where paths with low weight are omitted). In practice, the depth first search procedure has an iterative relaxation procedure. If the minimum path weight specified is too high, i.e. no paths have the minimum weight, the search is repeated with a gradually decreasing threshold until a specified number of paths are admitted.

The resulting output is thus a collection of probability distributions: each point in $S_{i}$ has a corresponding probability distribution of points in $S_{j}.$ Our use of soft correspondences naturally leads to robustness; small changes in the number of surfaces considered will naturally lead to small changes in the soft correspondences. In order to obtain an actual correspondence from the soft correspondences computed above, one need only to take a statistic of the soft correspondences. We restrict our attention to the maximum likelihood estimate and Frechet mean with respect to geodesic distance. 

\subsection{Accuracy of Ground Truth Propagation}

\begin{figure*}
\begin{center}
\centering
\includegraphics[width = 0.9\textwidth]{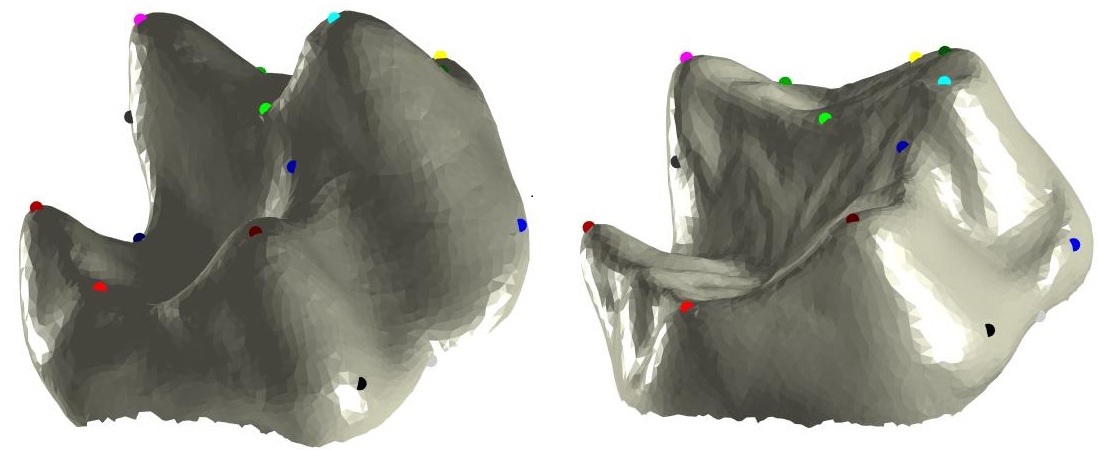}
\caption{Two different surfaces with observer landmarks taken from \cite{boyer2011algorithms}. Corresponding landmarks have the same color.}\label{fig:LandmarkExample}
\end{center}
\end{figure*}

To evaluate the quality of the soft registrations computed by Algorithm~\ref{alg:softCorrespondenceAlg}, we test our ability to propagate given ground truth for a collection of anatomical surfaces. We use as data the 116 tooth crowns used in \cite{boyer2011algorithms}. These teeth are equipped with a consistent set of eighteen observer landmarks divided into two categories: type 2, which correspond to nongeometric features relative to the position of the tooth in the jaw, and type 3 landmarks, which corresponding to more familiar geometric features such as cusps and saddles. Two such landmarked surfaces are given in Figure~\ref{fig:LandmarkExample}. 

For each pair of surfaces, we compute two different initial collections of registrations and induced distances via two unsupervised methods: those via the continuous Procrustes distance \cite{lipman2013continuous}, and those via the GP-BD procedure defined in \cite{gao2019gaussian}. Registrations from the continuous Procrustes method are constructed by first computing local curvature extrema, finding an optimal M\"{o}bius transform via exhaustive search that minimizes distances between these extrema after the surfaces are embedded in the hyperbolic disk, and finally interpolates the correspondences via a thin plate spline. The GP-BD methodology focuses on sampling a number of Gaussian Process landmarks, for which initial putative matches are then made via the wave kernel signature \cite{aubry2011wave} and then subsequently refined and interpolated via methods optimizing bounded distortion \cite{lipman2014feature, kovalsky2015large, kovalsky2016accelerated}. Note that both methods neither require knowledge of ground truth nor significant preprocessing such as alignment; all correspondences are learned. For both collections, we consider the following five methods for landmark propagation

\begin{itemize}
\item Direct pairwise propagation
\item Propagation along a minimum spanning tree
\item Propagation along the shortest path in the 5-NN graph
\item Maximal likelihood estimate of the output of EOP-SR (EOP Mode)
\item Frechet mean of the output of EOP-SR (EOP Mean)
\end{itemize}

The first two methods listed are currently utilized and are present to serve as a benchmark as well as to illustrate the general inaccuracies that can occur by solely relying on direct propagation. The third method is natural to consider given the context of the paper, as such a path is an approximation of the geodesic in shape space. The last methods are statistics derived from EOP-SR. 

The kernel we use in defining our weight matrices is the Gaussian kernel $K(d) = e^{-d^{2}/\sigma}$. Letting $\bar{d}$ denote average of the distances between each pair of surfaces, the default parameters used for the correspondences derived from Algorithm~\ref{alg:softCorrespondenceAlg} are as follows: we let $\sigma = 0.25 \bar{d},$ the maximum number of edges in any path is 4, and the minimum weight of any path is $e^{-\bar{d}^{2}}.$

\begin{figure*}[t]
\begin{center}
\centering

\includegraphics[width=.45\textwidth]{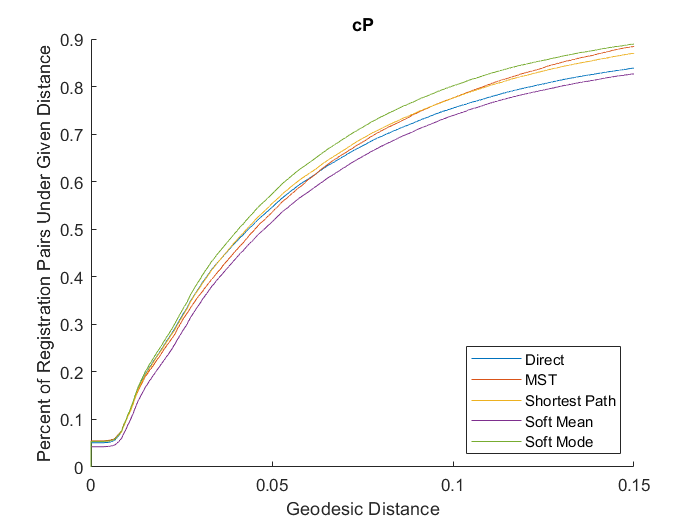}
\hfill
\includegraphics[width=.45\textwidth]{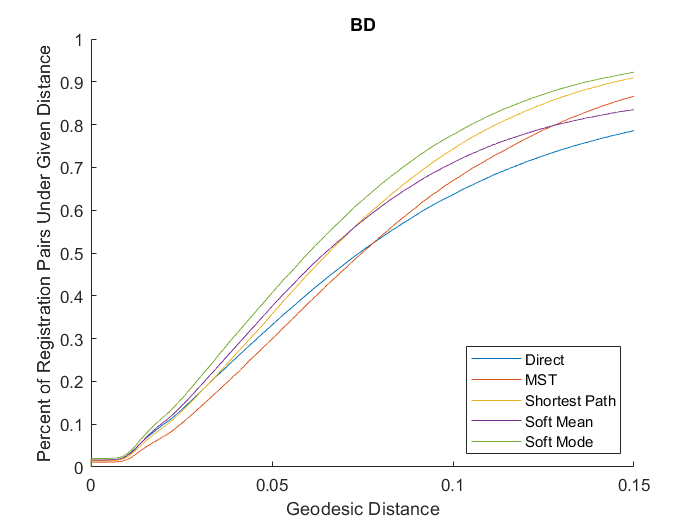}
	\captionof{figure}{Curves detailing the accuracy of landmark propagation for each of the three initial collections of registrations computed. Left: Continuous Procrustes maps. Right: GP-BD Maps.}\label{fig:EOPBenchmarkTotal}
\end{center}
\end{figure*}

To visualize the accuracy of our method, we utilize the Princeton benchmark introduced in \cite{lipman2009mobius}. The axes of this graph are geodesic distance and percentage of computed landmarks that are at most the corresponding geodesic distance away from the ground truth. Each curve on this graph represents the resulting accuracy of the correspondences. We see the results for all pairwise correspondences for all pairs of surfaces in Figure~\ref{fig:EOPBenchmarkTotal}. We see that for both collections of registrations, propagation via the mode of our soft correspondences is more accurate than the other methods utilized. That the mode is generally more accurate than both the direct and MST-based propagation is not surprising; the former tends to fail for highly nonisometric pairs, and our soft correspondences do not have any consistency requirements as imposed by the MST-based method. It is perhaps more surprising, however, that the soft correspondence mode is generally superior to the shortest path method, and that the Frechet mean does as poorly as it does. To the former, we remark that any shortest path computed is one particular approximation to a geodesic, while our soft correspondences essentially incorporate multiple, allowing for more robust decision making. Given the success of the soft correspondence mode, the failure of the soft correspondence Frechet mean is likely due to the shape of the support of the distribution.

\subsection{Evaluation of Parameter Choices}

\begin{figure*}[t]
\begin{center}
\centering

\includegraphics[width=.45\textwidth]{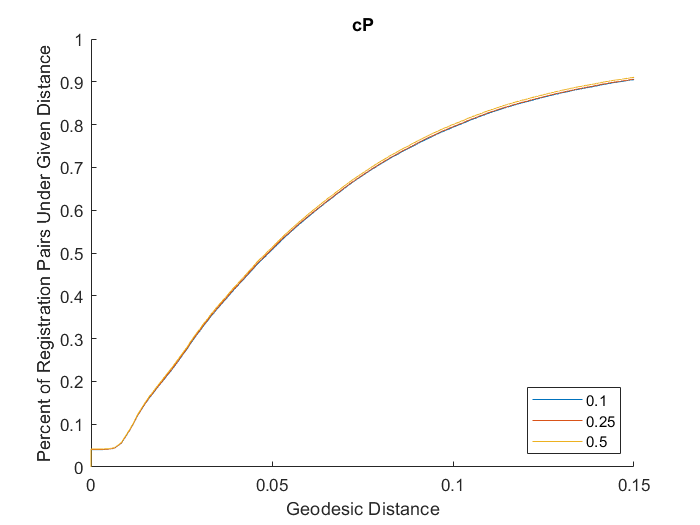}
\includegraphics[width=.45\textwidth]{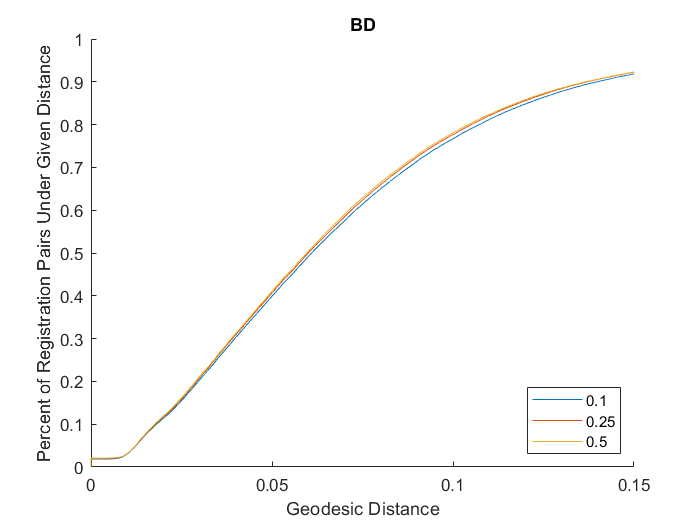}
\\
\includegraphics[width=.45\textwidth]{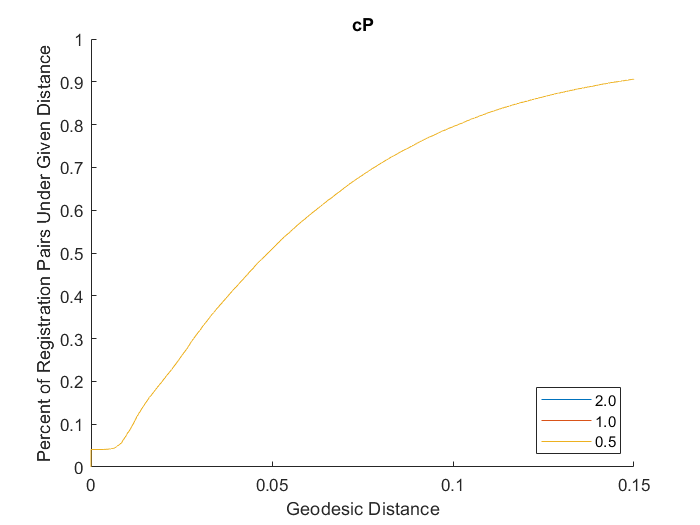}
\includegraphics[width=.45\textwidth]{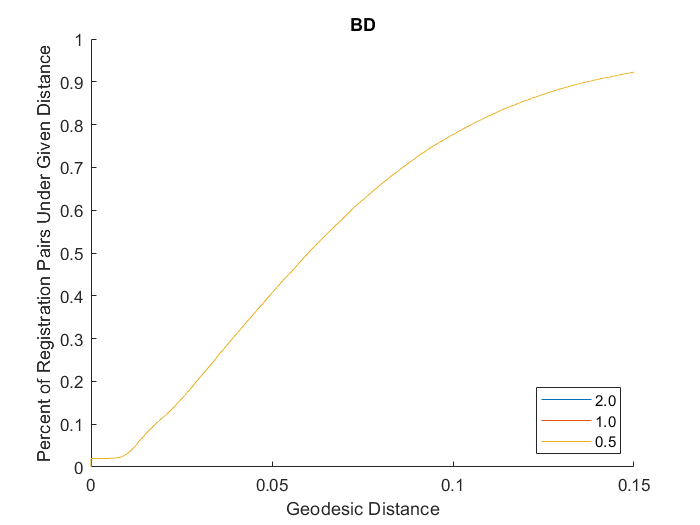}
\\
\includegraphics[width=.45\textwidth]{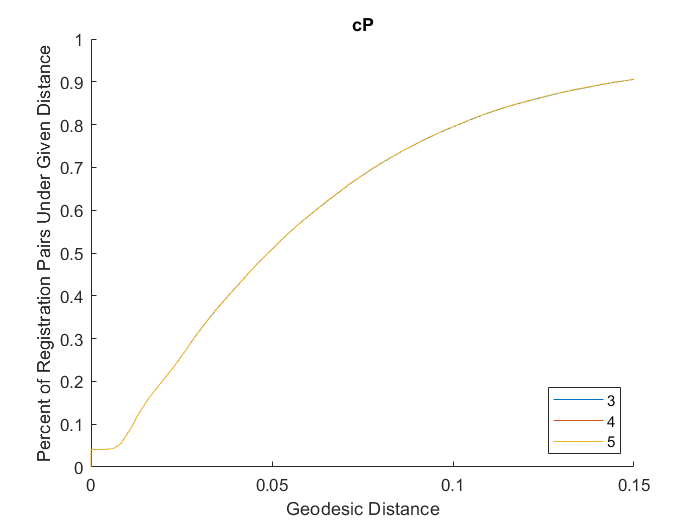}
\includegraphics[width=.45\textwidth]{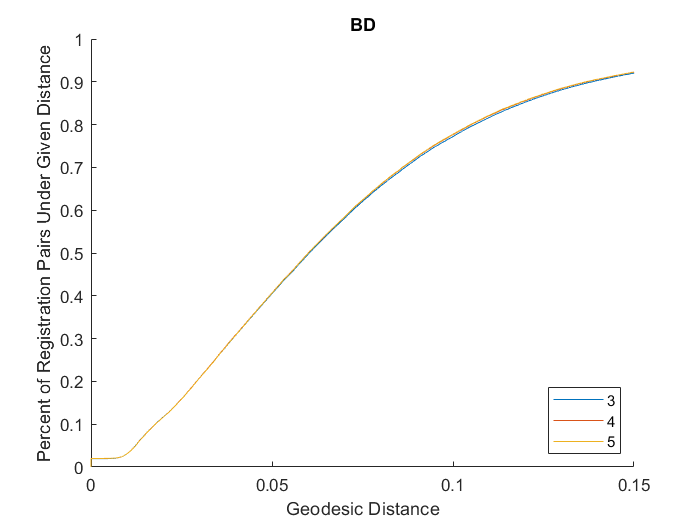}

	\captionof{figure}{Ablation Experiments. Top: Effect of changing the temperature parameter ($\sigma$). Middle: Effect of changing the minimum admissible path weight. Bottom: Effect of changing the maximum number of edges considered.}\label{fig:EOPAblation}
\end{center}
\end{figure*}

In the defintion of Algorithm~\ref{alg:softCorrespondenceAlg}, a number of different parameters choices are mentioned. It is naturally interesting to see what, if any effect, they have on the accuracy results. Figure~\ref{fig:EOPAblation} shows the accuracy results for the soft correspondence mode propagation when changing three separate parameters. The top of the figure considers three different values of $\sigma = a \bar{\sigma},$ where $a = 0.1, 0.25, 0.5.$ The middle considers changes in admissible path weight
of the form $b e^{-\bar{d}^{2}},$ where $b = 0.5, 1, 2.$ Lastly, the bottom of the figure considers changes in the maximum number of edges in any admissible path, ranging between 3 and 5. The figure shows negligible change in all cases, suggesting robustness of the parameters chosen.

\section{Application: Robust and Consistent Surface Registration}\label{sec:consistent}
The results from the previous section suggest that our forward propagation methodology can significantly improve registration quality. Those results, however, are alone insufficient for real-world application for two reasons. First, any rigorous analysis, such as one using standard shape space methods \cite{kendall1984shape, kendall1989survey, dryden2016statistical}, ultimately requires consistent registration of all surfaces in a collection. Both Algorithm~\ref{alg:softCorrespondenceAlg} and all experiments related to it conducted in Section~\ref{sec:registration} make no assumption about end result concerning consistency; the algorithm makes a choice based on the computed soft correspondence, but has no requirements on consistency, nor is there any requirement that the points used in each surface belong to some canonical subset such as the landmarks from \cite{boyer2011algorithms}. The algorithm could be easily modified if such a subset were known: for example, a projection operation could have been added to the experiments in Section~\ref{sec:registration} that would require any decision be in terms of one of the eighteen landmarks used. However, the second issue we bring up is that such an expectation of always having a canonical set of expert landmarks is not always feasible in practice, especially for large collections of surfaces. Even if they were to exist, we would be immediately presented with a paradox: from a practical perspective, if we knew such landmarks a priori, we would also likely know how they are registered! With such registrations, we could then easily obtain registrations of the entire surface, assuming that such a surface is a topological disk or sphere, using established methods for interpolating sparse correspondences \cite{kovalsky2015large, kovalsky2016accelerated, aigerman2016hyperbolic}. Thus, the main practical bottlenecks are to first determine sparse correspondences between a given pair of surfaces which can be interpolated to yield a high-quality registration, and then enforce that such registrations are consistent.

\subsection{Robust Partial Matching of Gaussian Process Landmarks}

\begin{figure*}
\begin{center}
\centering
	\includegraphics[width=\textwidth]{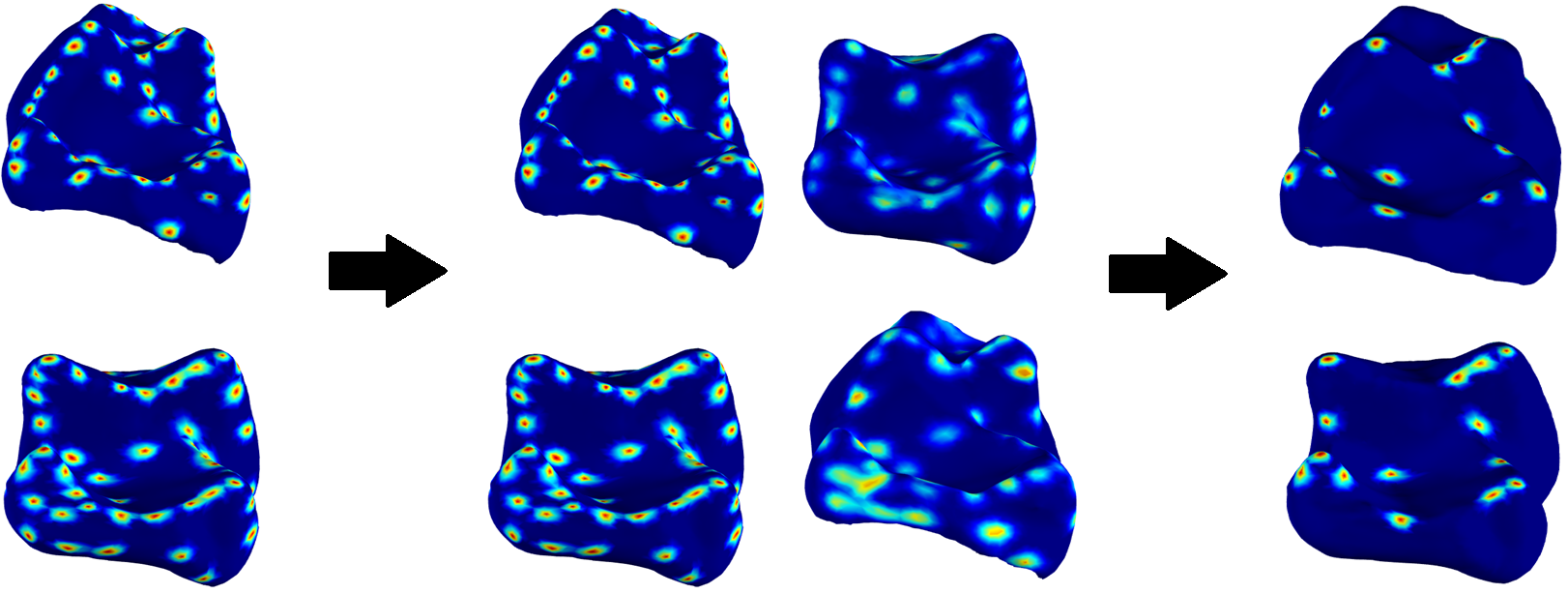}
\end{center}
\caption{A visual summary of our collection-based correspondence refinement procedure. After choosing salient landmarks on each shape, a subset of an initial collection of registrations of potentially variable quality are used to determine likelihoods of correspondences for each landmark. Final landmarks are chosen by a cycle-consistency based affinity.}\label{fig:EOPProcedure}
\end{figure*}

Maps generated by the continuous Procrustes (cP) \cite{lipman2013continuous} and Gaussian Process-Bounded Distortion (GP-BD) \cite{gao2019gaussian} procedure operate at a high level in the same way. First, candidate points on each surface of interest are generated, matches between said candidate points (if they exist) are obtained, and the resulting sparse correspondences are then interpolated in order to complete the registration procedure. In particular, not all of the generated points are placed in correspondence; the output of each method prior to interpolation is a {\emph{partial}} matching with some of the generated candidate points omitted. Given the success of our soft correspondence mode in improving registration accuracy, as shown in Figure~\ref{fig:EOPBenchmarkTotal}, we would like to adapt our procedure in order to generate robust sparse correspondences of surfaces using only the prior information given, i.e. namely the surfaces and the initial collection of registrations and distances.

\begin{algorithm} 
\caption{Eyes on the Prize Partial Matching (EOP-PM)}
\label{alg:partialMatchingAlg}
\begin{algorithmic}

\State {Input: Shapes $S_{a},S_{b}$; distances $(d_{ji})$; registrations $\{f_{ji}\}$; \\ threshold $\lambda$, possible landmarks $\{v_{1}^{a},...v_{k_{A}}^{a}\} = A \subseteq S_{a}$ and $\{v_{1}^{a},...v_{k_{B}}^{b}\}= B \subseteq S_{b},$ error radius $\varepsilon > 0$}
\State{Matches = []}
\For{$i \in \{1,...,k_{A}\}$}
	\State{Compute $f_{ba}^{s}(v_{i}^{a})$ from Algorithm~\ref{alg:softCorrespondenceAlg}}
\EndFor
\For{$i \in \{1,...,k_{B}\}$}
	\State{Compute $f_{ab}^{s}(v_{i}^{b})$ from Algorithm~\ref{alg:softCorrespondenceAlg}}
\EndFor
\For{$i \in \{1,...,k_{A}\}$}
	\For{$j \in \{1,...,k_{B}\}$}
		\If{$\mathbb{P}(f_{ba}^{s}(v_{i}^{a}) \in B_{\varepsilon}(v_{j}^{b})) > 0.5$}
			\If{$\mathbb{P}(f_{ab}^{s}(v_{j}^{b}) \in B_{\varepsilon}(v_{i}^{a})) > 0.5$}
				\State{Matches.append($(v_{i}^{a},v_{j}^{b})$)}
				\State{BREAK}
			\EndIf
		\EndIf
	\EndFor
\EndFor

\State {Output: Matches}

\end{algorithmic}
\end{algorithm}

Our proposed partial matching algorithm can be found in Algorithm~\ref{alg:partialMatchingAlg}, for which a demonstrated workflow can be found in Figure~\ref{fig:EOPProcedure} using registrations from the cP procedure on a pair of surfaces from \cite{boyer2011algorithms} using 50 Gaussian process landmarks\cite{gao2019gaussianMain, gao2019gaussian}, which have been empirically shown to generate a collection of landmarks used that outlines regions of significant curvature on a surface. First, for a given pair of shapes $S_{a}$ and $S_{b},$  we follow Algorithm~\ref{alg:softCorrespondenceAlg} to generate soft correspondences for each of the possible landmarks in $A$ and $B$. We then employ a conservative mutual matching procedure. A pair of landmarks $(v_{i}^{a},v_{j}^{b})$ are matched if most of the probability mass of $f_{ba}^{s}(v_{i}^{a})$ is concentrated around a small geodesic ball $ B_{\varepsilon}(v_{j}^{b}))$ around $v_{j}^{b},$ where $\varepsilon>0$ is a user-input quantity, and vice versa. Given that the soft correspondences are generated from a family of efficient paths between $S_{a}$ and $S_{b},$ we are in essence declaring that a pair of landmarks are registered if their soft correspondences are {\emph{approximately cycle-consistent}} (ACC) with respect to the family of paths chosen. Note in particular that, for $\varepsilon$ sufficiently small, the ACC constraint guarantees that each landmark on $S_{a}$ can be paired with at most one landmark on $S_{b}$ and vice versa, removing any possibility of tie-breaking. Note that Figure~\ref{fig:EOPProcedure} shows the effect of this conservative constraint: the back cusp on the top specimen is not registered to any point on the cuspless specimen on the bottom. We do not, however, view this as a flaw in light of the expert registered landmarks given in \cite{boyer2011algorithms}. Algorithm~\ref{alg:partialMatchingAlg} is conservative by design, and only registers pairs based on their satisfaction of a relatively stringent condition. The lack of registration can only indicate sufficient uncertainty, given the initial collection of distances and registrations, to register areas of distinct nonisometry.

As a brief aside, it is perhaps of mathematical interest to note that our sparse registration procedure does not involve the direct optimization of a particular objective function. Rather, our procedure is one of determining which points satisfy the ACC constraint given by the two if-statements, and not just choosing the most cycle-consistent pairs via minimization of an objective function as in \cite{huang2013consistent}. We suspect that this distinction is important and is worth studying of its own accord, especially given the recently established importance of cycle consistency outside of registration, e.g. CycleGAN \cite{zhu2017unpaired}. Such a study is, however, out of scope for this paper and is deferred to future, much more general work.

\begin{figure*}
\begin{center}
\centering
\includegraphics[width=\textwidth]{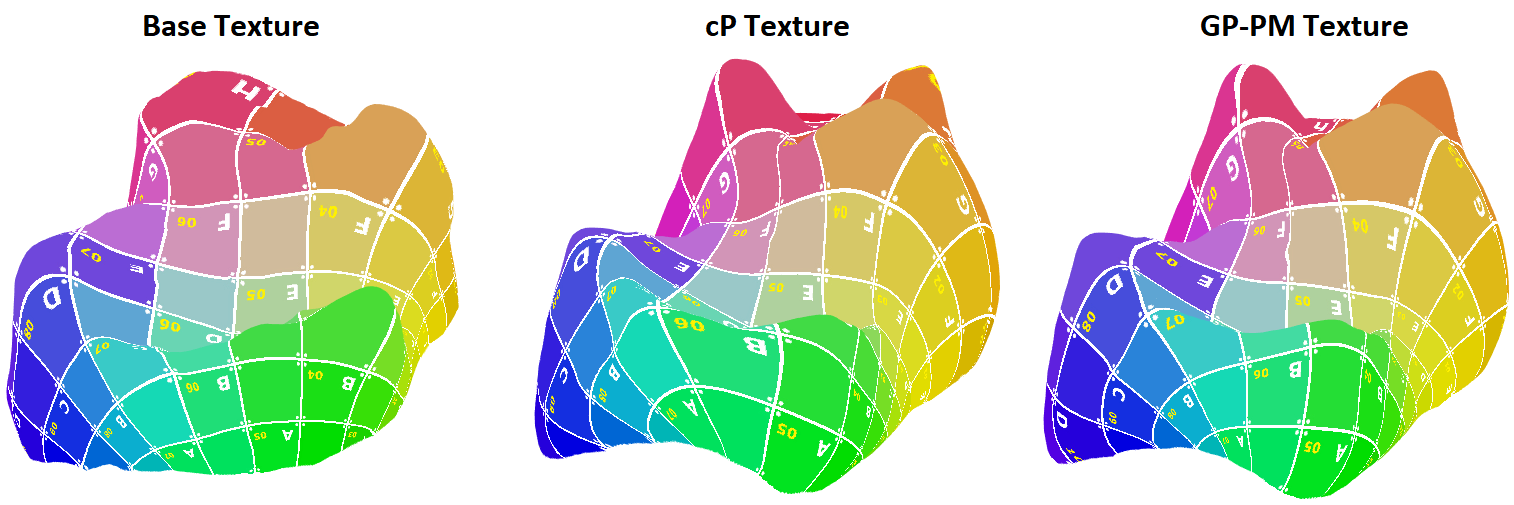}
\includegraphics[width=\textwidth]{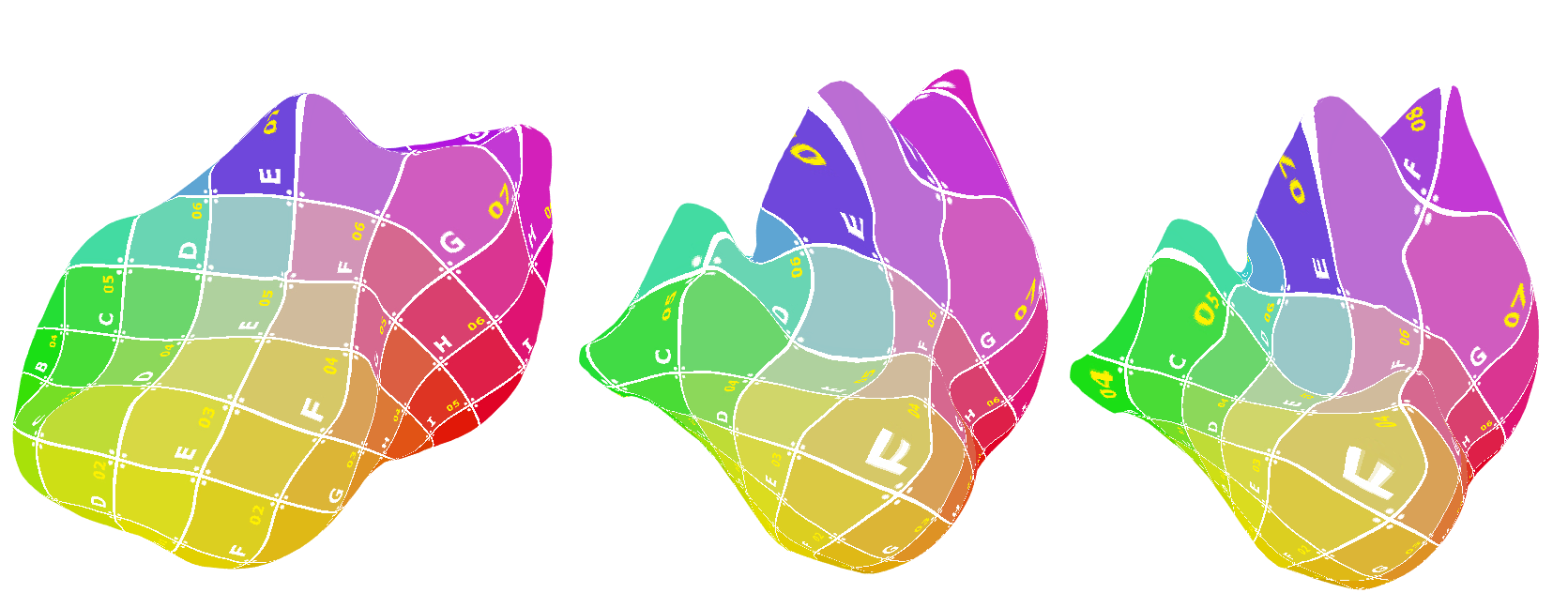}
	\captionof{figure}{Examples illustrating the ability for interpolation of partial matching to fix inaccurate correspondences with Gaussian Process landmarks. Left: tooth crown with prescribed texture coordinates. Middle: Computed continuous Procrustes map to the left tooth. Left: Map computed by interpolating Gaussian Process partial matching. Top: Fixed correspondence of ridges, Bottom: lower distortion for highly different geometries.}\label{fig:MapExamples}
\end{center}
\end{figure*}

Figure \ref{fig:MapExamples} shows two examples, one in each row, of scenarios commonly encountered in collections of highly nonisometric surfaces \cite{gao2018development}, again using surfaces from \cite{boyer2011algorithms} and Gaussian process landmarks. The left shows a surface that has been textured based on its conformal parametrization to the plane via the mid-edge uniformization \cite{lipman2009mobius}. The middle shows another surface with transferred texture from the left based on its cP registration. The right is the resulting texture transfer based on a combination of the output of Algorithm~\ref{alg:partialMatchingAlg} with interpolation based on thin plate splines, as in the cP registration method. In the top row, we note that the texture based on the cP registration is not aligned with that of the base texture; this can be seen by looking at the change in the texture along the bottom ridge. The texture obtained from the Gaussian Process partial matching procedure does not suffer from this issue. In the bottom row, we see that the pink portion of the texture on the Gaussian Process partial matching registration (around the top right of each surface) exhibits significantly less warping than that of the cP registration. Thus, our partial matching procedure using an entire collection of registrations along with the ACC constraint can allow for higher quality registration than those obtained by directly minimizing an objective, making it easy to fix these common errors.

\subsection{An Algorithm to Consistently Register Aligned Surfaces}
We are now ready to define our method for obtaining consistent registration of surfaces, but first add one additional motivating remark. Many methods used in end-to-end registration of collections of surfaces, either in their proposal or their implementation, make explicit or implicit assumptions as to the topology of surfaces in question. For example, the implementation of both the cP and GP-BD registration methods require surfaces to be topological disks. This is a major bottleneck, as many surfaces of practical interest need not be simply connected. Some methods, however, do not: the hyperbolic orbifold procedure for interpolating sparse correspondences proposed in \cite{aigerman2016hyperbolic}, for example, is theoretically valid for surfaces of higher genus (though to the best of our knowledge, no such implementation exists). Given that the methods by design depend on an initial collection of registrations, it is thus crucial that any pipeline we propose be theoretically amenable to surfaces regardless of topology; merely relying on the previously used methods depending on disk topology is thus insufficient.

Note, however, that none of our algorithms make any assumptions on topology. Indeed, the methods we have developed readily adapt to any particular choice of registration. Given our restricted interest in biological surfaces, which are generally rigid in nature, a natural candidate input to our methods appears: projection after alignment. Indeed, though biological surfaces such as the ones considered here have proven difficult to fully register, there are well established algorithms (such as Auto3DGM \cite{puente2013distances, boyer2015new}) that will accurately align large collections of such surfaces independent of assumed topology. Thus, distances and registrations from these algorithms are natural to feed in to the procedures we've outlined.
\begin{algorithm}
\caption{Obtaining Mappings by Revising Alignments (EOP-Proj)}\label{alg:MappingsFromAlignment}
\begin{algorithmic}

\State {Input: Aligned shapes $S_{1},...,S_{n}$; distances $(d_{ji})$; projections $\{f_{ji}\}$; threshold $\lambda$, number of Gaussian process landmarks $k$, mismatch thresholds $\varepsilon_{1}, \varepsilon_{2} > 0,$ maximum number of matches $m.$}
\State{AllMatches $= []$}
\State{Compute $j_{Fr} = \text{argmin}_{1 \leq i \leq n} \sum_{j=1}^{n} d_{ji}^{2}$}
\For{$i \in \{1,...,n\}$} 
	\State{Compute curvature local maxima $P^{GC}_{i}$ of $S_{i}$}
	\State{Compute $k$ Gaussian Process landmarks $P_{i}^{GP}$ of $S_{i}$}
\EndFor

\For{$i \in \{1,...,n\}$} 
	\State{CurMatches $= [].$}
	\If{$i \neq j_{Fr}$}
		\State{Compute $MP_{i}^{C}$ = EOP-PM($P^{C}_{i},P^{GC}_{j_{Fr}},\lambda,\varepsilon_{1}$)}
		\State{Compute $MP_{i}^{GP}$ = EOP-PM($P^{GP}_{i},P^{GP}_{j_{Fr}},\lambda,\varepsilon_{2}$)}
		\State{CurMatches $= MP_{i}^{C}$}
		\While{size(CurMatches) $< m$}
			\State{Compute $q = \text{argmax}_{p \in MP^{GP}_{i}} E(p,\text{CurMatches})$}
			\State{CurMatches.append($q$)}
			\State{$P^{GP}_{i}$.remove($q$)}
		\EndWhile
	\EndIf
	\State{AllMatches.append(CurMatches)}
\EndFor

\State{$\tilde{S}_{j_{Fr}} = S_{j_{Fr}}.$}
\For{$i \in \{1,...,n\}$} 
	\If{$i \neq j_{Fr}$}
		\State{Compute $\tilde{f}_{j_{Fr}i} = \text{Interpolate}(S_{i},S_{j_{Fr}},\text{AllMatches}[i])$}
		\State{Compute $\tilde{S}_{i} = \text{Reparametrize}(S_{i},S_{j_{Fr}},\tilde{f}_{j_{Fr}i})$}
	\EndIf
\EndFor

\State{Output: $\tilde{S}_{1},...,\tilde{S}_{n}$}

\end{algorithmic}
\end{algorithm}

The formal pipeline we propose is titled EOP-Proj and is given by Algorithm~\ref{alg:MappingsFromAlignment}. Though the pipeline does not explicitly depend on aligned surfaces (and hence, projections after alignment), we restrict our attention to this setting. The algorithm works by first extracting two sets of points of interest from each shape: local maxima of some curvature (with radius of local maxima user-defined), and Gaussian Process landmarks. We pick as a template shape the Frechet mean with respect to the input distance matrix so as to have a canonical, data-driven template for which to base all registration off. We then run the EOP-PM on both sets of landmarks (the curvature extrema and the Gaussian process landmarks) to obtain two sets of matched landmarks, though using different mismatch thresholds for each set of landmarks. We are then left with two sets of sparse correspondences. 

Since both sets of landmarks are generated independently of one another, there is a significant possibility that both collections of registered pairs overlap. In order to deal with this overlap, we employ the following procedure. Every registered pair of curvature extrema is admitted into our collection of registered landmarks. Additional registered pairs are added iteratively based on maximizing the following functional, for $A_{i} = \{(a_{1}^{k},a_{2}^{k})\} \subseteq S_{i} \times S_{j_{Fr}}$: 

\begin{equation}\label{eqn:Energy}
E(p,A_{i}) =  \min_{(a_{1},a_{2}) \in A_{i}} d_{S_{i}}(p,a_{1}) +  \min_{(a'_{1},a'_{2}) \in A_{i}} d_{S_{j_{Fr}}}(p,a'_{2}) 
\end{equation}

\noindent where $d_{S_{i}}$ and $d_{j_{Fr}}$ are the geodesic distances on $S_{i}$ and $S_{j_{Fr}}$ respectively. In other words, we greedily add more registered pairs based on how far each pair is away from previously admitted pairs. Our choice of starting with registered pairs stemming from curvature extrema is motivated by empirical observations for small collections of nonisometric surfaces, where homologous curvature maxima may be located farther away from one another, which may not be properly accounted for in Gaussian Process landmark registration. Thus, we can think of the addition of Gaussian Process landmark pairs as filling in details unaccounted for by the curvature extrema.

After obtaining the sparse registration, we proceed by interpolating these pairs to obtain full registrations of the surfaces with the Frechet mean. Once these are obtained, we then reparametrize each of the surfaces so that the mesh structure is the same as that of the Frechet mean. This is possible if the registrations computed are homeomorphisms; though homeomorphisms are not guaranteed in general, some methods for computing full registrations are guaranteed to be locally homeomorphic (e.g. \cite{aigerman2016hyperbolic}). We do not, however, propose a specific procedure for this that we always follow, and leave that for the user to determine. At this point, all surfaces are consistently registered as all registrations follow naturally from the mesh structure of the Frechet mean.

\subsection{Experimental Results}

\subsubsection{Tooth Crowns}
\begin{figure*}
\begin{center}
\centering
\includegraphics[width=\textwidth]{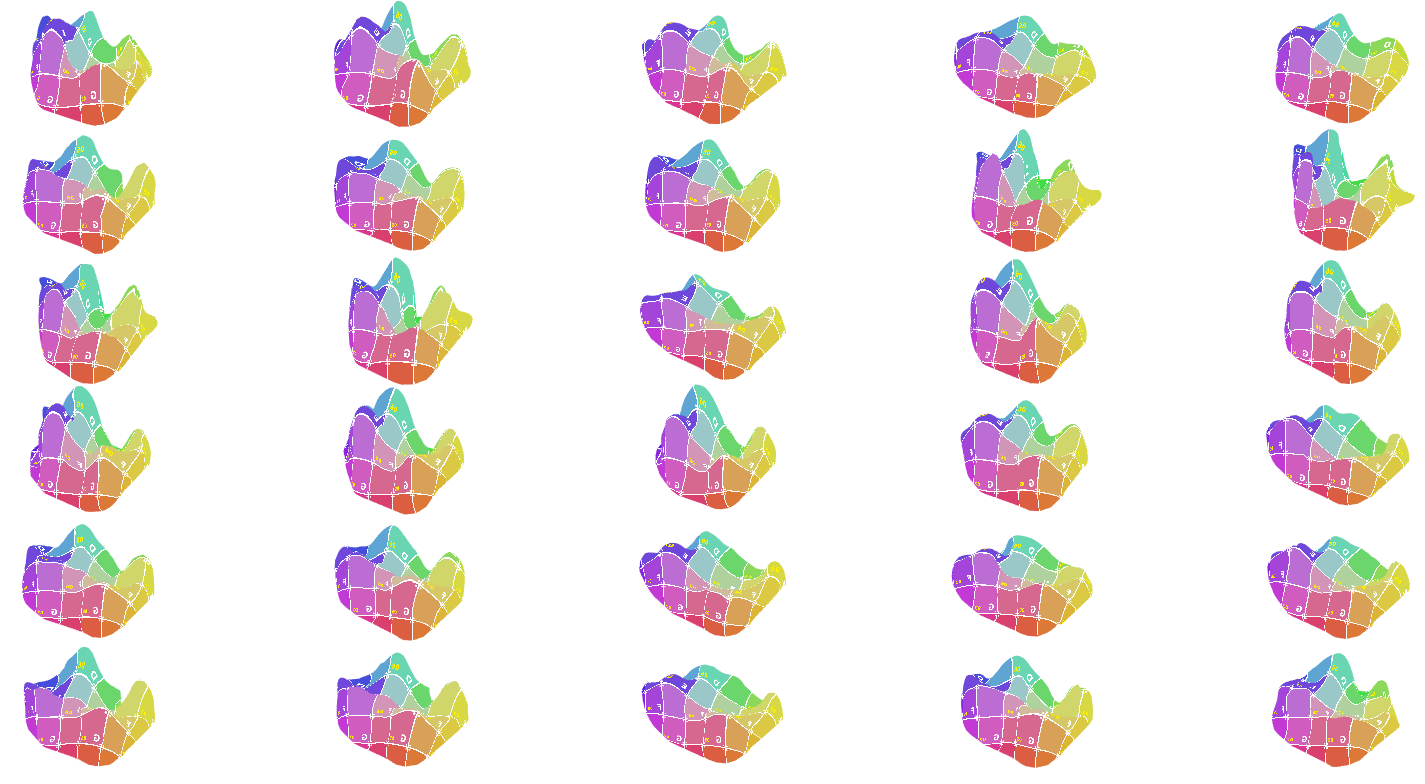}
	\captionof{figure}{Thirty surfaces from \cite{boyer2011algorithms} that are consistently registered using Algorithm~\ref{alg:MappingsFromAlignment}. The registrations are indicated by the textures.}\label{fig:TexturePNAS}
\end{center}
\end{figure*}

Continuing our previous experiments, we apply Algorithm~\ref{alg:MappingsFromAlignment} to the entire collection of tooth crowns from \cite{boyer2011algorithms} to generate a consistently registered collection of surfaces. All surfaces are first aligned via Auto3DGM \cite{boyer2015new}. Here, we use as curvature the conformal factor corresponding to the mid-edge uniformization \cite{lipman2009mobius} along with 400 Gaussian process landmarks, and restrict the total collection of matches used in registration to at most 14. As the meshes have reasonably uniform triangulation, we define our threshold radii in terms of the discrete distance based on the triangulation (i.e. the number of edges between one vertex and another): we use a radius of 8 for the conformal factor maxima, and 1 for the Gaussian Process landmarks. All parameters for the soft correspondence are the same default parameters used in Section~\ref{sec:registration}. The surfaces were all embedded into the plane using the same landmark constraint methodology employed in \cite{gao2019gaussian}. We show texture maps on thirty of the consistently  registered surfaces in Figure~\ref{fig:TexturePNAS}.

\begin{table*}
\begin{center}
\begin{tabular}{c|c|c|c|c|c|c}
\toprule
Method & Expert  & cP  &Auto3DGM  & PM-SDP  &EOP-Proj-FPS-200 (OURS)&EOP-Proj-GP-200 (OURS)  \\
\midrule
Genus &91.9 &90.9 &89.9 & 91.9 &90.9 & {\bf{95.0}} \\
Family & 94.3& 92.5& 90.5 & 94.3& 92.4 & {\bf{96.2}} \\
Order & 95.7& 94.8& 92.2 & {\bf{98.2}}& 96.6 & 97.2\\
\bottomrule
\end{tabular}
\end{center}
\caption{Percent accuracy of leave one out classification of 116 teeth from \cite{boyer2011algorithms} for a variety of algorithms. Registration via EOP-Proj achieves state of the art results in classification by both genus and family, and competitive results for classification by order. Note that these results are better than those achieved by human experts.}\label{tab:PNASCompare}
\end{table*}

As a test of the quality of our registrations, we repeat the leave one out classification benchmark introduced in \cite{boyer2011algorithms} and report the results in Table~\ref{tab:PNASCompare}. Each method in the table naturally yields a distance matrix, making leave one out classification a matter of finding the smallest nonzero element of each row. Our methods are given by EOP-Proj-FPS-200 and EOP-Proj-GP-200. To generate our distance matrix for EOP-Proj-FPS-200, we first compute the conformal factor extrema of the Frechet mean surface used in Algorithm~\ref{alg:MappingsFromAlignment}, and then feed those points into either the usual farthest point sampling algorithm until we have a collection of 200 points. We then take, from each surface in our colletion, the corresponding 150 points, and then compute the usual Procrustes distance (after normalization and alignment to the point cloud of the Frechet mean). The sample points used in EOP-Proj-GP-200 are the corresponding points after selecting 200 Gaussian Process landmarks on the Frechet mean surface.

We compare our method to a number previously studied on this example: expert landmarks \cite{boyer2011algorithms}, continuous Procrustes registration \cite{lipman2013continuous}, Auto3DGM \cite{boyer2015new}, and PM-SDP \cite{maron2016point}, a more recent method based on semi-definite programming that, to the best of our knowledge, achieved the best prior results on this dataset. We see in the table that EOP-Proj-GP-200 achieves state of the art results for both classification by genus and family, and is better than expert landmarks for classification by order. This heavily suggests that our method is able to better capture the fine level details of shape space than other automatic and expert-based methods. That the Gaussian Process method outperforms the farthest point sampling method gives further credence to the idea that the way in which the points are subsampled over the surface matters

\subsubsection{Talus bones}

\begin{figure*}
\begin{center}
\centering
\includegraphics[width=\textwidth]{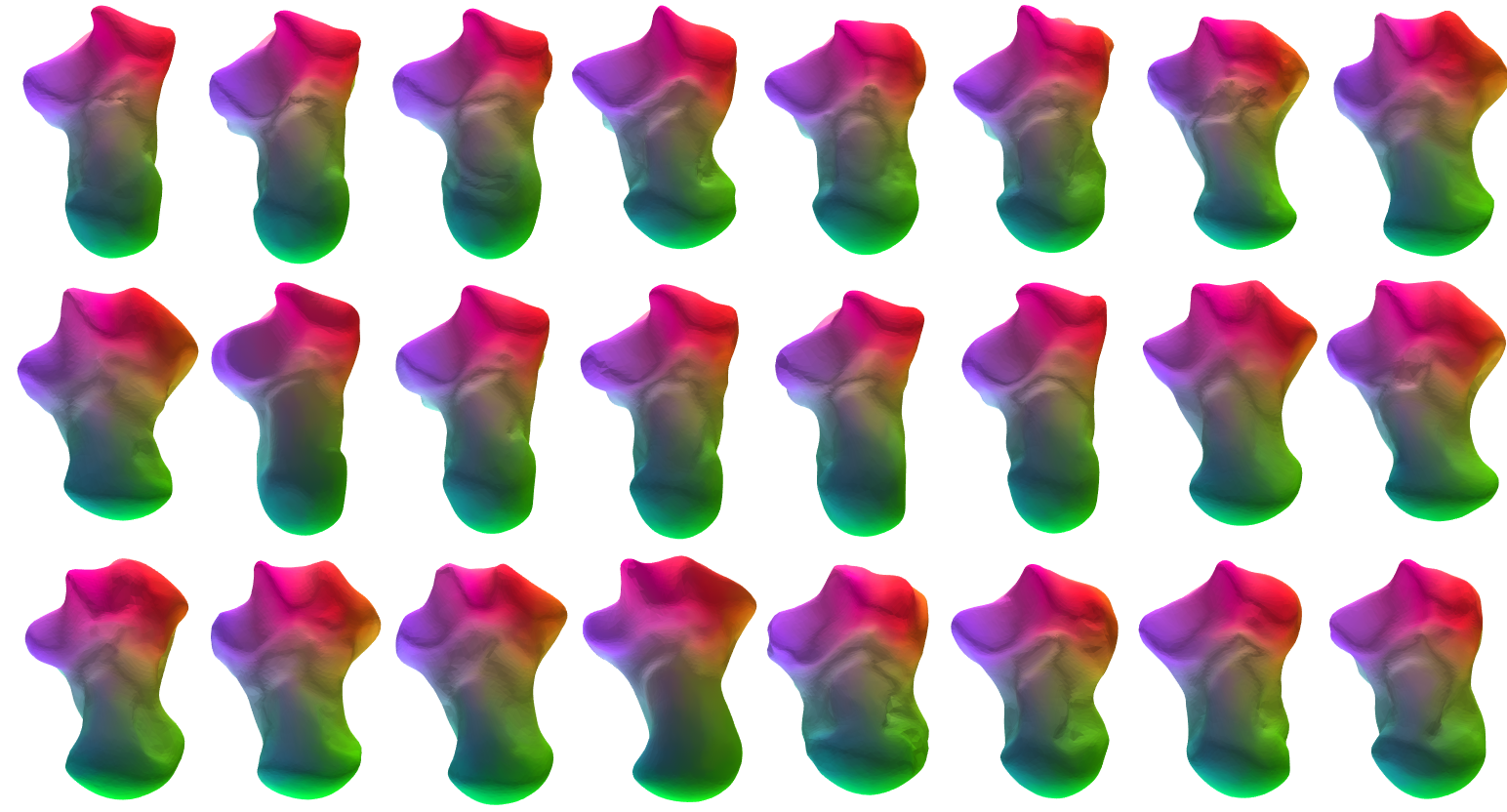}
\\
\includegraphics[width=\textwidth]{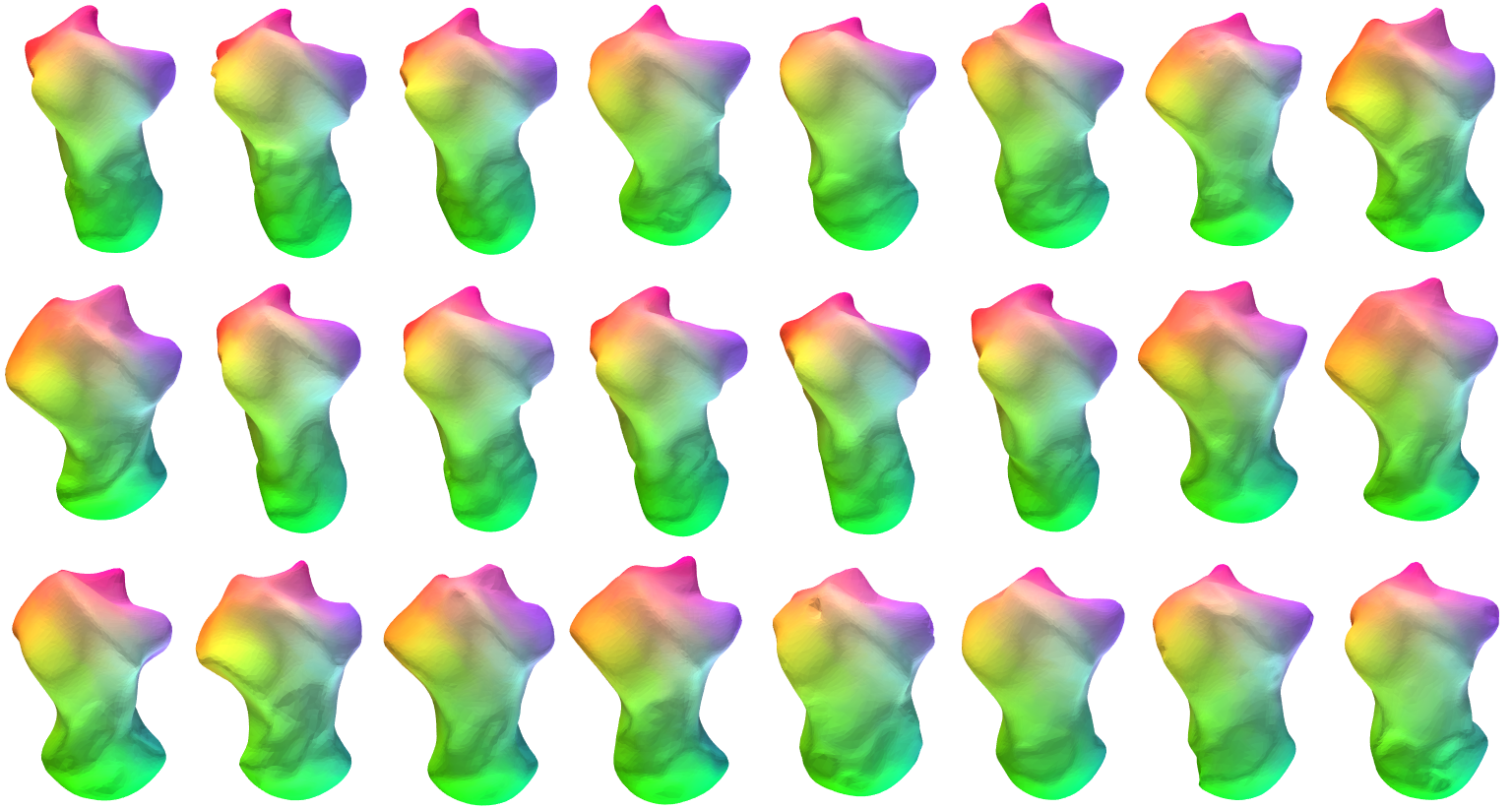}
	\caption{Two views of twenty four consistently registered talus specimens using Algorithm~\ref{alg:MappingsFromAlignment}. Registrations are determined by color.}\label{fig:TalusColor}
\end{center}
\end{figure*}

\begin{figure}[h]
\includegraphics[width=0.5\textwidth]{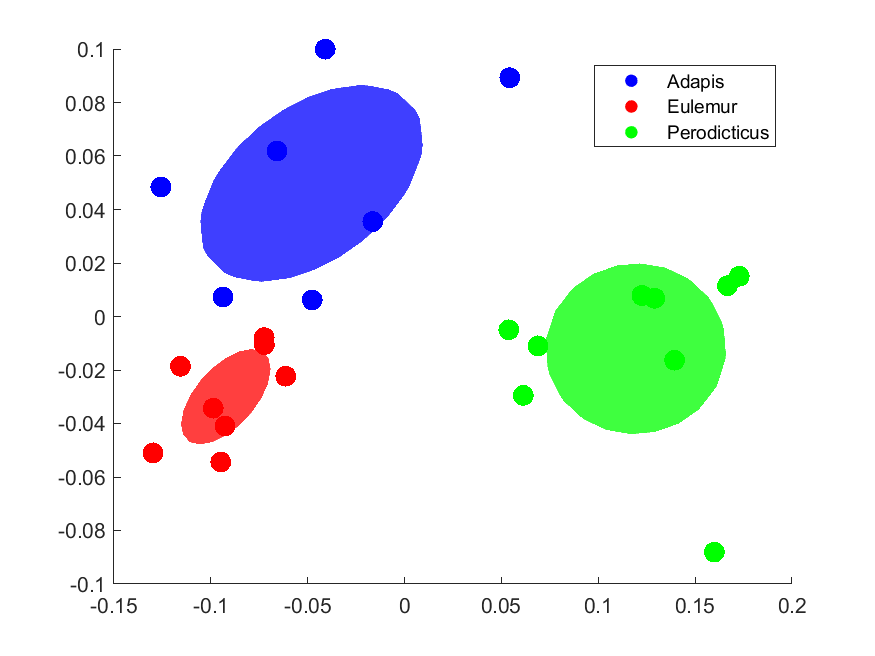}

	\caption{A two dimensional PCA plot of the surfaces in Figure~\ref{fig:TalusColor} with respect to the Procrustes distance.} \label{fig:TalusPCA}

\end{figure}

As previously mentioned, there is no prior restriction of our methods on surface topology. To illustrate this, we consider a small sample of talus bones from three genera (Adapis, Eulemur, and Perodicticus) that were obtained from Morphosource \cite{boyer2016morphosource}. These surfaces were registered using Gaussian curvature maxima and 200 Gaussian process landmarks. Registered landmarks were interpolated by the hyperbolic orbifold method proposed in \cite{aigerman2016hyperbolic}. Registrations are shown in Figure~\ref{fig:TalusColor}, showing homologous curvature extrema in correspondence. We can also see in Figure~\ref{fig:TalusPCA} the result of a two dimensional PCA embedding of these surfaces with respect to the Procrustes distances. The distances generated by EOP-Proj are capable of clustering each genus, and showing that both Adapis and Eulemur are more closely related to each other than either are to Perodicticus despite using a relatively low dimensional embedding.

\subsection{Limitations}
The main limitation of EOP-Proj and variants that use some other class of registrations, as well as the other algorithms proposed in the paper, is the implicit requirements on the datasets at hand. Namely, the implicit assumption that the collection of shapes is sufficiently densely sampled is not a ubiquitous property of many datasets of interest within the realm of registration (though many in addition to the ones considered above, such as DynaFAUST \cite{bogo2017dynamic} and faces \cite{ma2015non, gilani2017dense}, at least somewhat satisfy this property). Without this assumption, the differential geometric ideas on which we built our algorithms breaks down. We do not, however, see this as a total negative, especially in light of the promising state of the art results achieved by the above methods. The methods proposed are a natural way to deal with registration of nonisometric shapes, in which pure optimization can fail. We thus see our methodology as addressing an important use case that had previously gone unaddressed within the realm of registration.

Another limitation of EOP-Proj is the particular way in which the registrations are made consistent. Our use of the Frechet mean, though a reasonably heuristic, is by no means guaranteed to be optimal. An easy thought experiment suggests potential problems can arise if the Frechet mean itself has radically different geometry than other shapes in the collection. Alternative methods to guarantee consistency are thus subjects for future work.
\section{Conclusion}\label{sec:conclusion}
In this paper, we investigated the problem of consistent registrations of nonisometric surfaces. Motivated by biology, manifold learning and the previous role of parallel transport within registration, we proposed soft registration via EOP flows (EOP-SR), which have appealing properties both theoretically and computationally. We showed that employing EOP-SR can drastically improve accuracy over registration based purely on optimization methods while being robust by registering a point on one shape to a distribution of points on another shape. We adapted our methodology to build a partial matching method, EOP-PM, that can fix even subtle issues in registration quality. We then showed how EOP-PM can be used to accurately and consistently register large collections of shapes, achieving state of the art classification results in the process.

There are many directions of interest in which to adapt our work. In the course of our paper, interesting theoretical questions concerning parallel transport arose; progress on these questions could yield insight for guarantees about registration quality. From a computational perspective, our focus on surfaces from evolutionary biology was purely of interest and not a restriction. In particular, other datasets, such as face scans and certain collections of images, are natural candidate in which to adapt the algorithms of our paper, which we leave for future work. Finally, as mentioned in the Section~\ref{sec:consistent}, the ACC condition proposed in the definition of EOP-PM has potential merit for further investigation given the recent increase that cycle consistency has received within the general machine learning literature.

\subsection{Declarations}
The author was partially supported by AFOSR through an NDSEG Research Fellowship. The author declares no conflicts of interest. Availability information of data and code relevant to the above discussions is detailed in the Appendix. The author would like to thank Ingrid Daubechies, Tingran Gao, Shahar Kovalsky, Chen-Yun Lin, Sayan Mukherjee, Shan Shan, and Barak Sober for many helpful discussions in compiling this work.

\bibliographystyle{plainnat}
\bibliography{bibliography}

\begin{thebibliography}{51}
\providecommand{\natexlab}[1]{#1}
\providecommand{\url}[1]{\texttt{#1}}
\expandafter\ifx\csname urlstyle\endcsname\relax
  \providecommand{\doi}[1]{doi: #1}\else
  \providecommand{\doi}{doi: \begingroup \urlstyle{rm}\Url}\fi

\bibitem[Aigerman and Lipman(2016)]{aigerman2016hyperbolic}
Noam Aigerman and Yaron Lipman.
\newblock Hyperbolic orbifold tutte embeddings.
\newblock \emph{ACM Transactions on Graphics (TOG)}, 35\penalty0 (6):\penalty0
  1--14, 2016.

\bibitem[Aubry et~al.(2011)Aubry, Schlickewei, and Cremers]{aubry2011wave}
Mathieu Aubry, Ulrich Schlickewei, and Daniel Cremers.
\newblock The wave kernel signature: A quantum mechanical approach to shape
  analysis.
\newblock In \emph{2011 IEEE international conference on computer vision
  workshops (ICCV workshops)}, pages 1626--1633. IEEE, 2011.

\bibitem[Bogo et~al.(2017)Bogo, Romero, Pons-Moll, and Black]{bogo2017dynamic}
Federica Bogo, Javier Romero, Gerard Pons-Moll, and Michael~J Black.
\newblock Dynamic faust: Registering human bodies in motion.
\newblock In \emph{Proceedings of the IEEE conference on computer vision and
  pattern recognition}, pages 6233--6242, 2017.

\bibitem[Bookstein(1997)]{bookstein1997morphometric}
Fred~L Bookstein.
\newblock \emph{Morphometric tools for landmark data: geometry and biology}.
\newblock Cambridge University Press, 1997.

\bibitem[Boyer et~al.(2011)Boyer, Lipman, Clair, Puente, Patel, Funkhouser,
  Jernvall, and Daubechies]{boyer2011algorithms}
Doug~M Boyer, Yaron Lipman, Elizabeth~St Clair, Jesus Puente, Biren~A Patel,
  Thomas Funkhouser, Jukka Jernvall, and Ingrid Daubechies.
\newblock Algorithms to automatically quantify the geometric similarity of
  anatomical surfaces.
\newblock \emph{Proceedings of the National Academy of Sciences}, 108\penalty0
  (45):\penalty0 18221--18226, 2011.

\bibitem[Boyer et~al.(2015)Boyer, Puente, Gladman, Glynn, Mukherjee, Yapuncich,
  and Daubechies]{boyer2015new}
Doug~M Boyer, Jesus Puente, Justin~T Gladman, Chris Glynn, Sayan Mukherjee,
  Gabriel~S Yapuncich, and Ingrid Daubechies.
\newblock A new fully automated approach for aligning and comparing shapes.
\newblock \emph{The Anatomical Record}, 298\penalty0 (1):\penalty0 249--276,
  2015.

\bibitem[Boyer et~al.(2016)Boyer, Gunnell, Kaufman, and
  McGeary]{boyer2016morphosource}
Doug~M Boyer, Gregg~F Gunnell, Seth Kaufman, and Timothy~M McGeary.
\newblock Morphosource: Archiving and sharing 3-d digital specimen data.
\newblock \emph{The Paleontological Society Papers}, 22:\penalty0 157--181,
  2016.

\bibitem[Chavel(2006)]{chavel2006riemannian}
Isaac Chavel.
\newblock \emph{Riemannian geometry: a modern introduction}, volume~98.
\newblock Cambridge university press, 2006.

\bibitem[Coifman and Lafon(2006)]{coifman2006diffusion}
Ronald~R Coifman and St{\'e}phane Lafon.
\newblock Diffusion maps.
\newblock \emph{Applied and computational harmonic analysis}, 21\penalty0
  (1):\penalty0 5--30, 2006.

\bibitem[Coifman et~al.(2005)Coifman, Lafon, Lee, Maggioni, Nadler, Warner, and
  Zucker]{coifman2005geometric}
Ronald~R Coifman, Stephane Lafon, Ann~B Lee, Mauro Maggioni, Boaz Nadler,
  Frederick Warner, and Steven~W Zucker.
\newblock Geometric diffusions as a tool for harmonic analysis and structure
  definition of data: Diffusion maps.
\newblock \emph{Proceedings of the national academy of sciences}, 102\penalty0
  (21):\penalty0 7426--7431, 2005.

\bibitem[Dryden and Mardia(2016)]{dryden2016statistical}
Ian~L Dryden and Kanti~V Mardia.
\newblock \emph{Statistical Shape Analysis: With Applications in R}.
\newblock John Wiley \& Sons, 2016.

\bibitem[Dyke et~al.(2019)Dyke, Stride, Lai, and Rosin]{dyke2019shrec}
Roberto Dyke, Caleb Stride, Yukun Lai, and Paul Rosin.
\newblock Shrec-19: Shape correspondence with isometric and non-isometric
  deformations.
\newblock In \emph{Eurographics}, 2019.

\bibitem[Gao(2015)]{gao2015hypoelliptic}
Tingran Gao.
\newblock \emph{Hypoelliptic Diffusion Maps and Their Applications in Automated
  Geometric Morphometrics}.
\newblock PhD thesis, Citeseer, 2015.

\bibitem[Gao(2019)]{gao2019diffusion}
Tingran Gao.
\newblock The diffusion geometry of fibre bundles: Horizontal diffusion maps.
\newblock \emph{Applied and Computational Harmonic Analysis}, 2019.

\bibitem[Gao et~al.(2018)Gao, Yapuncich, Daubechies, Mukherjee, and
  Boyer]{gao2018development}
Tingran Gao, Gabriel~S Yapuncich, Ingrid Daubechies, Sayan Mukherjee, and
  Doug~M Boyer.
\newblock Development and assessment of fully automated and globally transitive
  geometric morphometric methods, with application to a biological comparative
  dataset with high interspecific variation.
\newblock \emph{The Anatomical Record}, 301\penalty0 (4):\penalty0 636--658,
  2018.

\bibitem[Gao et~al.(2019{\natexlab{a}})Gao, Brodzki, and
  Mukherjee]{gao2019geometry}
Tingran Gao, Jacek Brodzki, and Sayan Mukherjee.
\newblock The geometry of synchronization problems and learning group actions.
\newblock \emph{Discrete \& Computational Geometry}, pages 1--62,
  2019{\natexlab{a}}.

\bibitem[Gao et~al.(2019{\natexlab{b}})Gao, Kovalsky, Boyer, and
  Daubechies]{gao2019gaussian}
Tingran Gao, Shahar~Z Kovalsky, Doug~M Boyer, and Ingrid Daubechies.
\newblock Gaussian process landmarking for three-dimensional geometric
  morphometrics.
\newblock \emph{SIAM Journal on Mathematics of Data Science}, 1\penalty0
  (1):\penalty0 237--267, 2019{\natexlab{b}}.

\bibitem[Gao et~al.(2019{\natexlab{c}})Gao, Kovalsky, and
  Daubechies]{gao2019gaussianMain}
Tingran Gao, Shahar~Z Kovalsky, and Ingrid Daubechies.
\newblock Gaussian process landmarking on manifolds.
\newblock \emph{SIAM Journal on Mathematics of Data Science}, 1\penalty0
  (1):\penalty0 208--236, 2019{\natexlab{c}}.

\bibitem[Gilani et~al.(2017)Gilani, Mian, Shafait, and Reid]{gilani2017dense}
Syed~Zulqarnain Gilani, Ajmal Mian, Faisal Shafait, and Ian Reid.
\newblock Dense 3d face correspondence.
\newblock \emph{IEEE transactions on pattern analysis and machine
  intelligence}, 40\penalty0 (7):\penalty0 1584--1598, 2017.

\bibitem[Guo et~al.(2014)Guo, Bennamoun, Sohel, Lu, and Wan]{guo20143d}
Yulan Guo, Mohammed Bennamoun, Ferdous Sohel, Min Lu, and Jianwei Wan.
\newblock 3d object recognition in cluttered scenes with local surface
  features: a survey.
\newblock \emph{IEEE Transactions on Pattern Analysis and Machine
  Intelligence}, 36\penalty0 (11):\penalty0 2270--2287, 2014.

\bibitem[Huang and Guibas(2013)]{huang2013consistent}
Qi-Xing Huang and Leonidas Guibas.
\newblock Consistent shape maps via semidefinite programming.
\newblock In \emph{Computer Graphics Forum}, volume~32, pages 177--186. Wiley
  Online Library, 2013.

\bibitem[Joshi and Miller(2000)]{joshi2000landmark}
Sarang~C Joshi and Michael~I Miller.
\newblock Landmark matching via large deformation diffeomorphisms.
\newblock \emph{IEEE transactions on image processing}, 9\penalty0
  (8):\penalty0 1357--1370, 2000.

\bibitem[Kendall(1984)]{kendall1984shape}
David~G Kendall.
\newblock Shape manifolds, procrustean metrics, and complex projective spaces.
\newblock \emph{Bulletin of the London mathematical society}, 16\penalty0
  (2):\penalty0 81--121, 1984.

\bibitem[Kendall(1989)]{kendall1989survey}
David~G Kendall.
\newblock A survey of the statistical theory of shape.
\newblock \emph{Statistical Science}, pages 87--99, 1989.

\bibitem[Klassen et~al.(2004)Klassen, Srivastava, Mio, and
  Joshi]{klassen2004analysis}
Eric Klassen, Anuj Srivastava, M~Mio, and Shantanu~H Joshi.
\newblock Analysis of planar shapes using geodesic paths on shape spaces.
\newblock \emph{IEEE transactions on pattern analysis and machine
  intelligence}, 26\penalty0 (3):\penalty0 372--383, 2004.

\bibitem[Koehl and Hass(2013)]{koehl2013automatic}
Patrice Koehl and Joel Hass.
\newblock Automatic alignment of genus-zero surfaces.
\newblock \emph{IEEE transactions on pattern analysis and machine
  intelligence}, 36\penalty0 (3):\penalty0 466--478, 2013.

\bibitem[Kovalsky et~al.(2015)Kovalsky, Aigerman, Basri, and
  Lipman]{kovalsky2015large}
Shahar~Z Kovalsky, Noam Aigerman, Ronen Basri, and Yaron Lipman.
\newblock Large-scale bounded distortion mappings.
\newblock \emph{ACM Trans. Graph.}, 34\penalty0 (6):\penalty0 191--1, 2015.

\bibitem[Kovalsky et~al.(2016)Kovalsky, Galun, and
  Lipman]{kovalsky2016accelerated}
Shahar~Z Kovalsky, Meirav Galun, and Yaron Lipman.
\newblock Accelerated quadratic proxy for geometric optimization.
\newblock \emph{ACM Transactions on Graphics (TOG)}, 35\penalty0 (4):\penalty0
  1--11, 2016.

\bibitem[Lee(2013)]{lee2013smooth}
John~M Lee.
\newblock Smooth manifolds.
\newblock In \emph{Introduction to Smooth Manifolds}, pages 1--31. Springer,
  2013.

\bibitem[Lipman and Daubechies(2013)]{lipman2013continuous}
Reema Lipman, Yaron andAl-Aifari and Ingrid Daubechies.
\newblock Continuous procrustes distance between two surfaces.
\newblock \emph{Communications on Pure and Applied Mathematics}, 66\penalty0
  (6):\penalty0 934--964, 2013.

\bibitem[Lipman and Funkhouser(2009)]{lipman2009mobius}
Yaron Lipman and Thomas Funkhouser.
\newblock M{\"o}bius voting for surface correspondence.
\newblock \emph{ACM Transactions on Graphics (TOG)}, 28\penalty0 (3):\penalty0
  1--12, 2009.

\bibitem[Lipman et~al.(2014)Lipman, Yagev, Poranne, Jacobs, and
  Basri]{lipman2014feature}
Yaron Lipman, Stav Yagev, Roi Poranne, David~W Jacobs, and Ronen Basri.
\newblock Feature matching with bounded distortion.
\newblock \emph{ACM Transactions on Graphics (TOG)}, 33\penalty0 (3):\penalty0
  1--14, 2014.

\bibitem[Lorenzi and Pennec(2013)]{lorenzi2013geodesics}
Marco Lorenzi and Xavier Pennec.
\newblock Geodesics, parallel transport \& one-parameter subgroups for
  diffeomorphic image registration.
\newblock \emph{International journal of computer vision}, 105\penalty0
  (2):\penalty0 111--127, 2013.

\bibitem[Ma et~al.(2015)Ma, Zhao, Ma, and Tian]{ma2015non}
Jiayi Ma, Ji~Zhao, Yong Ma, and Jinwen Tian.
\newblock Non-rigid visible and infrared face registration via regularized
  gaussian fields criterion.
\newblock \emph{Pattern Recognition}, 48\penalty0 (3):\penalty0 772--784, 2015.

\bibitem[Maron et~al.(2016)Maron, Dym, Kezurer, Kovalsky, and
  Lipman]{maron2016point}
Haggai Maron, Nadav Dym, Itay Kezurer, Shahar Kovalsky, and Yaron Lipman.
\newblock Point registration via efficient convex relaxation.
\newblock \emph{ACM Transactions on Graphics (TOG)}, 35\penalty0 (4):\penalty0
  1--12, 2016.

\bibitem[Michor(2008)]{michor2008topics}
Peter~W Michor.
\newblock \emph{Topics in differential geometry}, volume~93.
\newblock American Mathematical Soc., 2008.

\bibitem[Morphosource(2020)]{MorphosourceWeb}
Morphosource.
\newblock Morphosource.
\newblock \url{https://www.morphosource.org}, 2020.

\bibitem[Myronenko and Song(2010)]{myronenko2010point}
Andriy Myronenko and Xubo Song.
\newblock Point set registration: Coherent point drift.
\newblock \emph{IEEE transactions on pattern analysis and machine
  intelligence}, 32\penalty0 (12):\penalty0 2262--2275, 2010.

\bibitem[Ovsjanikov et~al.(2012)Ovsjanikov, Ben-Chen, Solomon, Butscher, and
  Guibas]{ovsjanikov2012functional}
Maks Ovsjanikov, Mirela Ben-Chen, Justin Solomon, Adrian Butscher, and Leonidas
  Guibas.
\newblock Functional maps: a flexible representation of maps between shapes.
\newblock \emph{ACM Transactions on Graphics (TOG)}, 31\penalty0 (4):\penalty0
  1--11, 2012.

\bibitem[Pennec et~al.(2019)Pennec, Sommer, and Fletcher]{pennec2019riemannian}
Xavier Pennec, Stefan Sommer, and Tom Fletcher.
\newblock \emph{Riemannian Geometric Statistics in Medical Image Analysis}.
\newblock Academic Press, 2019.

\bibitem[Pons-Moll et~al.(2015)Pons-Moll, Romero, Mahmood, and
  Black]{pons2015dyna}
Gerard Pons-Moll, Javier Romero, Naureen Mahmood, and Michael~J Black.
\newblock Dyna: A model of dynamic human shape in motion.
\newblock \emph{ACM Transactions on Graphics (TOG)}, 34\penalty0 (4):\penalty0
  1--14, 2015.

\bibitem[Puente(2013)]{puente2013distances}
Jesus Puente.
\newblock Distances and algorithms to compare sets of shapes for automated
  biological morphometrics.
\newblock 2013.

\bibitem[Rodol{\`a} et~al.(2017)Rodol{\`a}, Cosmo, Bronstein, Torsello, and
  Cremers]{rodola2017partial}
Emanuele Rodol{\`a}, Luca Cosmo, Michael~M Bronstein, Andrea Torsello, and
  Daniel Cremers.
\newblock Partial functional correspondence.
\newblock In \emph{Computer Graphics Forum}, volume~36, pages 222--236. Wiley
  Online Library, 2017.

\bibitem[Roth(1993)]{roth1993three}
VL~Roth.
\newblock On three-dimensional morphometrics, and on the identification of
  landmark points.
\newblock \emph{Contributions to Morphometrics. Museo Nacional de Ciencias
  Naturales, Madrid}, 41:\penalty0 61, 1993.

\bibitem[Schonsheck et~al.(2018)Schonsheck, Bronstein, and
  Lai]{schonsheck2018nonisometric}
Stefan~C Schonsheck, Michael~M Bronstein, and Rongjie Lai.
\newblock Nonisometric surface registration via conformal laplace-beltrami
  basis pursuit.
\newblock \emph{arXiv preprint arXiv:1809.07399}, 2018.

\bibitem[Singer and Wu(2012)]{singer2012vector}
Amit Singer and H-T Wu.
\newblock Vector diffusion maps and the connection laplacian.
\newblock \emph{Communications on pure and applied mathematics}, 65\penalty0
  (8):\penalty0 1067--1144, 2012.

\bibitem[Solomon et~al.(2012)Solomon, Nguyen, Butscher, Ben-Chen, and
  Guibas]{solomon2012soft}
Justin Solomon, Andy Nguyen, Adrian Butscher, Mirela Ben-Chen, and Leonidas
  Guibas.
\newblock Soft maps between surfaces.
\newblock In \emph{Computer Graphics Forum}, volume~31, pages 1617--1626. Wiley
  Online Library, 2012.

\bibitem[Styner et~al.(2006)Styner, Oguz, Xu, Brechb{\"u}hler, Pantazis,
  Levitt, Shenton, and Gerig]{styner2006framework}
Martin Styner, Ipek Oguz, Shun Xu, Christian Brechb{\"u}hler, Dimitrios
  Pantazis, James~J Levitt, Martha~E Shenton, and Guido Gerig.
\newblock Framework for the statistical shape analysis of brain structures
  using spharm-pdm.
\newblock \emph{The insight journal}, \penalty0 (1071):\penalty0 242, 2006.

\bibitem[Tam et~al.(2012)Tam, Cheng, Lai, Langbein, Liu, Marshall, Martin, Sun,
  and Rosin]{tam2012registration}
Gary~KL Tam, Zhi-Quan Cheng, Yu-Kun Lai, Frank~C Langbein, Yonghuai Liu, David
  Marshall, Ralph~R Martin, Xian-Fang Sun, and Paul~L Rosin.
\newblock Registration of 3d point clouds and meshes: A survey from rigid to
  nonrigid.
\newblock \emph{IEEE transactions on visualization and computer graphics},
  19\penalty0 (7):\penalty0 1199--1217, 2012.

\bibitem[Van~Kaick et~al.(2011)Van~Kaick, Zhang, Hamarneh, and
  Cohen-Or]{van2011survey}
Oliver Van~Kaick, Hao Zhang, Ghassan Hamarneh, and Daniel Cohen-Or.
\newblock A survey on shape correspondence.
\newblock In \emph{Computer Graphics Forum}, volume~30, pages 1681--1707. Wiley
  Online Library, 2011.

\bibitem[Zhu et~al.(2017)Zhu, Park, Isola, and Efros]{zhu2017unpaired}
Jun-Yan Zhu, Taesung Park, Phillip Isola, and Alexei~A Efros.
\newblock Unpaired image-to-image translation using cycle-consistent
  adversarial networks.
\newblock In \emph{Proceedings of the IEEE international conference on computer
  vision}, pages 2223--2232, 2017.

\end{thebibliography}

\newpage

\appendix
\onecolumn
\section{Appendix}\label{sec:appendix}
\subsection{Further Review of Differential Geometry}
We briefly elaborate on details relevant to discussions in Sections~\ref{sec:prelim} and~\ref{sec:fiber}. For brevity, we refer to readers largely unfamiliar with smooth manifold theory and Riemannian geometry to standard texts previously mentioned, namely \cite{chavel2006riemannian, michor2008topics, lee2013smooth}, and review only details most relevant for our work.

\subsubsection{Riemannian Geometry}
We let $(M^{m},g)$ be an $m$-dimensional Riemannian manifold with metric $g$ and tangent bundle $TM.$ We also denote by $\mathcal{V}(M)$ the space of smooth vector fields on $M$. Recall that vector fields naturally as differential operators on smooth functions; e.g. if a vector field $X = \sum_{i=1}^{m} X_{i} \frac{\partial}{\partial x_{i}}$ is an expansion of $X$ in local coordinates, then for a smooth function $f$ on $M,$ we have $Xf = \sum_{i=1}^{m} X_{i} \frac{\partial f}{\partial x_{i}}.$

\begin{definition}
A {\bf{connection}} on a Riemannian manifold $(M,g)$ is a map $\nabla: \mathcal{V}(M) \times \mathcal{V}(M) \to \mathcal{V}(M),$ written $(X,Y) \to \nabla_{X} Y,$ that satisfies the following three properties:

\begin{enumerate}
\item For $f, g$ smooth functions on $M$ and $X, Y, Z$ smooth vector fields, we have

$$\nabla_{fX + gY} Z = f \nabla_{X} Z + g \nabla_{Y} Z$$

\item For $a, b \in \mathbb{R}$ and $X,Y,Z$ smooth vector fields on $M,$ we have

$$\nabla_{X} (aY+bZ) = a \nabla_{X}Y + b \nabla_{X}Z$$

\item For $f$ a smooth function on $M$ and $X,Y$ smooth vector fields, we have a Leibniz rule

$$\nabla{X}(fY) = f \nabla_{X}Y + (Xf)Y$$

\noindent where $X$ acts a partial derivative on $f$
\end{enumerate}
\end{definition}

\noindent Connections are the appropriate analogue of a directional derivative for vector fields. Specifically, for a vector field $X$ along a curve $\gamma:[a,b] \to M,$ $X'(t) = \frac{d}{dt}X := \nabla_{\dot{\gamma}(t)},$ sometimes written as $\nabla_{\frac{d}{dt}} X.$ It is well-known in Riemannian geometry that every manifold has a unique connection called the {\bf{Levi-Civita connection}} that satisfies the above as well as two additional properties: it is symmetric, i.e.

$$\nabla_{X}Y -\nabla_{Y}X = [X,Y]$$

\noindent where $[\cdot, \cdot]$ is the Lie bracket on $TM,$ the tangent bundle of $M,$ and it is compatible with the metric, i.e.

$$\nabla_{X}g(Y,Z) = g \left( \nabla_{X}Y,Z \right) + g \left(X, \nabla_{Y} Z \right)$$

The Levi-Civita connection satisfies the following property, which we need in proving Theorem~\ref{thm:EOPEstimate}. Note, however, that we only use the compatibility property.

\begin{lemma} \label{DerivLemma}
Let $X(t)$ be a smooth vector field defined along a curve $\gamma: [a,b] \to M$ nowhere equal to zero. Then
we have

$$\frac{d}{dt}||X|| := \frac{d}{dt}(g(X,X))^{\frac{1}{2}} \leq \left( g \left(X,X \right) \right)^{\frac{1}{2}} = ||\nabla_{\frac{d}{dt}}X||$$

\noindent where all derivatives are taken with respect to the Levi-Civita connection.
\end{lemma}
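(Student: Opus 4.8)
The plan is to differentiate $\|X\|^{2} = g(X,X)$ in two different ways and then compare. First I would note that, since $X(t)$ is nowhere zero, the function $t \mapsto g(X(t),X(t))$ is strictly positive and smooth, so $t \mapsto \|X(t)\| = \bigl(g(X,X)\bigr)^{1/2}$ is itself a smooth (and positive) real-valued function of $t$; this is exactly where the hypothesis ``nowhere equal to zero'' is used, as $\sqrt{\cdot}$ fails to be differentiable at the origin.

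Next I would compute $\frac{d}{dt}\|X\|^{2}$ two ways. On one hand, by the chain rule applied to $\|X\|^{2} = (\|X\|)^{2}$,
$$
\frac{d}{dt}\|X\|^{2} = 2\,\|X\|\,\frac{d}{dt}\|X\|.
$$
On the other hand, since the Levi-Civita connection is compatible with the metric (the property recalled just before the statement), applied to the vector fields $X$ and $X$ along $\gamma$,
$$
\frac{d}{dt}\,g(X,X) = g\!\left(\nabla_{\frac{d}{dt}}X,\,X\right) + g\!\left(X,\,\nabla_{\frac{d}{dt}}X\right) = 2\,g\!\left(\nabla_{\frac{d}{dt}}X,\,X\right).
$$
Equating the two expressions and dividing by $2\|X\| \neq 0$ gives the identity
$$
\frac{d}{dt}\|X\| = \frac{g\!\left(\nabla_{\frac{d}{dt}}X,\,X\right)}{\|X\|}.
$$

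Finally I would invoke the Cauchy--Schwarz inequality for the inner product $g$ on the tangent space $T_{\gamma(t)}M$: $\bigl|g(\nabla_{\frac{d}{dt}}X, X)\bigr| \le \|\nabla_{\frac{d}{dt}}X\|\,\|X\|$. Substituting this into the identity above and cancelling $\|X\|$ yields $\bigl|\frac{d}{dt}\|X\|\bigr| \le \|\nabla_{\frac{d}{dt}}X\|$, which is the claimed bound (and makes clear that the right-hand side of the displayed inequality in the statement should read $\|\nabla_{\frac{d}{dt}}X\| = \bigl(g(\nabla_{\frac{d}{dt}}X,\nabla_{\frac{d}{dt}}X)\bigr)^{1/2}$). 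There is no real obstacle here: the argument is a short computation, and the only points that deserve a word of care are (i) the smoothness of $\|X\|$, which needs $X \neq 0$, and (ii) the fact that only metric compatibility — not symmetry/torsion-freeness — of the connection is used, so the lemma holds for any metric connection.
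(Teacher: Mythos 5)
Your proof is correct and follows essentially the same route as the paper: differentiate $g(X,X)^{1/2}$ using metric compatibility and then apply Cauchy--Schwarz to $g\bigl(\nabla_{\frac{d}{dt}}X, X\bigr)$. Your two side remarks are also on point --- the middle expression in the displayed inequality is indeed a typo for $\bigl(g(\nabla_{\frac{d}{dt}}X,\nabla_{\frac{d}{dt}}X)\bigr)^{1/2}$, and only compatibility (not symmetry, despite the paper's opening phrase in its proof) is actually used.
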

\begin{proof}
Since the Levi-Civita connection is symmetric, we have

\begin{align*}
\frac{d}{dt}(g(X,X))^{\frac{1}{2}} &= \frac{1}{2}g(X,X)^{- \frac{1}{2}} \frac{d}{dt} g(X,X) \\
&=\frac{1}{2}g(X,X)^{- \frac{1}{2}} 2 g(\nabla_{\frac{d}{dt}}X,X) \\
&=g(X,X)^{- \frac{1}{2}}g(\nabla_{\frac{d}{dt}}X,X)
\end{align*}
\noindent by symmetry of the Riemannian metric and compatibility of the Levi-Civita connection. A direct application of the Cauchy-Schwarz inequality on $g(X',X)$ completes the proof.
\end{proof}

We assume for the remainder that we are only working with the Levi-Civita connection.

\subsubsection{Parallel Transport}

The heart of our methodology relies on parallel transport. The usual definition pertaining to vector fields is easy to state given for a Levi-Civita connection; given a smooth curve $\gamma:[a,b] \to M,$ there is a natural vector field $\cdot{\gamma}$ defined on the image of $\gamma$ given by the usual time derivative. Then, the parallel transport of a vector $v_{\gamma(a)} \in T_{\gamma(a)}M$ along $\gamma$ is the unique vector field $V$ defined on $\gamma(t)$ such that $V(a) = v$ and

$$\nabla_{\cdot{\gamma}(t)}V(t) = 0.$$

For fiber bundles, the standard definition of parallel transportis similar but quite technical, requiring a number of definitions that we would not directly reference again. In light of this, we adopt an equivalent definition, in which the notion of connection is itself abstracted away from the usual language. Despite this abstraction, the concepts are much more intuitive and clearly apply to our setting. Let $H(F)$ denote the space of homeomorphisms (or diffeomorphisms, as appropriate) of $F.$

\begin{definition}\label{def:connectionFiber}
Let $(E,B,F,\pi)$ be a fiber bundle with base manifold $B$ and fiber $F.$ A {\bf{connection}} is a function that maps every curve $\gamma:[0,1] \to B,$ to a family of homeomorphisms of $F$ (or diffeomorphisms, if the fiber bundle is smooth) $P_{\gamma}: [0,1] \times [0,1] \to H(F)$ such that $P_{\gamma}$ is continuous on $[0,1] \times [0,1],$ $P_{\gamma}(t,t) = Id_{F},$ and satisfies the following transitivity property for all $a,b,c \in [0,1]$:

$$ P_{\gamma}(c,b) \circ P_{\gamma}(b,a) = P_{\gamma}(c,a) $$
\end{definition}

\begin{definition}\label{def:parallelFiber}
Let $(E,B,F,\pi)$ be a fiber bundle with base manifold $B$ and fiber $F.$ A {\bf{connection}} is a function that maps every curve $\gamma:[0,1] \to B,$ to a family of homeomorphisms of $F$ (or diffeomorphisms, if the fiber bundle is smooth) $P_{\gamma}: [0,1] \times [0,1] \to H(F)$ such that $P_{\gamma}$ is continuous on $[0,1] \times [0,1],$ $P_{\gamma}(t,t) = Id_{F},$ and satisfies the following transitivity property for all $a,b,c \in [0,1]$:

$$ P_{\gamma}(c,b) \circ P_{\gamma}(b,a) = P_{\gamma}(c,a) $$

\noindent Given such a connection, the {\bf{parallel transport}} of $\gamma$ is given by $P_{\gamma}(1,0).$
\end{definition}

\noindent This property can be derived from the usual definition of parallel transport on fiber bundles as present in \cite{michor2008topics}. Of particular interest, however, is that our definition works for more general cases when smooth structures don't exist, and would readily apply to spaces of meshes.

\subsubsection{Fermi normal coordinates}

The last piece of exposition we require is Fermi normal coordinates, which were mentioned briefly in Theorem~\ref{thm:EOPFunction}. Fermi normal coordinates are the natural analogue to Cartesian coordinates in Euclidean space as exponential coordinates are the manifold analogue of polar coordinates in Euclidean space.

\begin{definition}
Let $P^{q} \subset M^{m}$ be a regular submanifold, let $x_{1},...,x_{q}$ be a coordinate chart on $P,$ and let $E_{q+1},...,E_{m}$ be a section of the tangent bundle along $P$ such that the $E_{j}|_{p}$ are an orthonormal completion of $T_{p}P$ in $T_{p}M.$ Then the functions $y_{i}, 1 \le i \le m$ defined by

\begin{align*}
y_{i} \left( \exp_{p} \left( \sum_{j=q+1}^{m} a_{j}E_{j}|_{p} \right) \right) = x_{i}(p), 1 \le i \le q \\
y_{i} \left( \exp_{p} \left( \sum_{j=q+1}^{m} a_{j}E_{j}|_{p} \right) \right) = a_{i}, q+1 \le i \le m
\end{align*}

form a {\bf{Fermi normal coordinate chart}} with respect to $P.$
\end{definition} \label{def:FermiNormalCoords}

\noindent Though the definition we state is for the general setting, we are only interested in the case where $q = 1$ in the above definition. Thus, in our work, we let $P$ be a geodesic between two points in a manifold. Theorem~\ref{thm:EOPFunction} that there is a family of curves that can be written as functions along the geodesic axis of such a Fermi coordinate chart.

\subsection{Proofs}
\subsubsection{Proof of Theorem~\ref{thm:EOPEstimate}}

By the fundamental theorem of calculus, we have

\begin{align*}
\Vert X_{1}(1)-X_{0}(1) \Vert &= \Vert  \int_{0}^{1}  \nabla_{\frac{d}{ds}}X_{s}(1) ds \Vert \\
&\leq   \int_{0}^{1} \Vert  \nabla_{\frac{d}{ds}}(X_{s}(1)) \Vert ds   \\
&= \int_{0}^{1} \left( \Vert  \nabla_{\frac{d}{ds}}(X_{s}(1)) \Vert- \Vert(\nabla_{\frac{d}{ds}}X_{s}(0)) \Vert \right)ds
\end{align*}

\noindent where the last equality follows from the assumption that $X_{s}(0)$ is constant. Another application of the fundamental theorem gives

\begin{align*}
\Vert X_{1}(1)-X_{0}(1) \Vert &\leq \int_{0}^{1} \left( \Vert  \nabla_{\frac{d}{ds}}(X_{s}(1)) \Vert- \Vert(\nabla_{\frac{d}{ds}}X_{s}(0)) \Vert \right)ds \\
&= \int_{0}^{1} \left( \int_{0}^{1}  \frac{d}{dt} \Vert \nabla_{\frac{d}{ds}}(X_{s}(t)) \Vert dt \right)ds \\
&\leq \int_{0}^{1} \left( \int_{0}^{1} \Vert \nabla_{\frac{d}{dt}} \nabla_{\frac{d}{ds}}(X_{s}(t)) \Vert dt \right)ds \\
&= \int_{0}^{1} \left( \int_{0}^{1} \Vert (\nabla_{\frac{d}{dt}} \nabla_{\frac{d}{ds}} - \nabla_{\frac{d}{ds}} \nabla_{\frac{d}{dt}}) (X_{s}(t)) \Vert dt \right)ds \\
\end{align*}

\noindent where the second inequality follows from Lemma \ref{DerivLemma} and the last equality follows from the definition of parallel transport. Since $\frac{d}{ds}$ and $\frac{d}{dt}$ are coordinate vector fields, we have from Definition \ref{Riemann}

$$\Vert X_{1}(1)-X_{0}(1) \Vert \leq \int_{0}^{1} \left( \int_{0}^{1} \Vert (R(H_{t},H_{s})(X_{s}(t)) \Vert dt \right)ds$$

\noindent where $H_{t}$ and $H_{s}$ are the corresponding partial derivative vector fields of the homotopy $H.$ The Riemann curvature term in the above integral is bounded above by $\frac{4}{3} K_{max} \Vert H_{s} \wedge H_{t} \Vert,$ where $\Vert H_{s} \wedge H_{t} \Vert$ is the induced area form of the homotopy; this is a direct consequence of Lemma 3.7 in \cite{bourguignon1978curvature}. The desired result immediately follows.

\subsubsection{Proof of Theorem \ref{thm:EOPFunction}}
Assume that the inner products of $s'(t)$ with $v_{s(t),x}$ and $v_{s(t),y}$ are strictly bounded above and below respectively by some uniform numbers $M_{1}$ and $M_{2}$ whenever the inner products are well defined, i.e. when $s(t) \neq x, y.$ We now define two sets: let $U_{x,\varepsilon}$ be the subset of $M$ such that every point $u \in U_{x,\varepsilon}$ satisfies $g \left( v_{x,u}, \frac{\partial}{\partial \gamma} \right) \geq 1 - \varepsilon,$ and analogously  $U_{y,\varepsilon}.$ If $M_{1}$ and $M_{2}$ are at least $\varepsilon,$ then any such curve $s(t)$ contained in the intersection of the Fermi coordinate neighborhood of $\gamma$ with the union of $U_{x,\varepsilon}$ and $U_{y,\varepsilon}$ necessarily has to satisfy $g \left( s'(t), \frac{\partial}{\partial \gamma} \right) > 0.$ An application of the implicit function theorem then completes the proof.

\subsection{Experiment Details}
\subsubsection{Meshes}
All meshes used throughout this paper are publicly available on the Morphosource Repository \cite{MorphosourceWeb}. All specimens were cleaned to be either homeomorphic to a disk or sphere as appropriate and downsampled to approximately 5000 vertices.
\subsubsection{Code}
We made use of the following external repositories for methods referenced in the text.

\begin{itemize}
\item{Continuous Procrustes Maps: {\emph{https://github.com/trgao10/cPdist}}}  \item{GP-BD Registration: \emph{https://github.com/shaharkov/GPLmkBDMatch}} \item{Auto3DGM: \emph{https://github.com/trgao10/PuenteAlignment}}
\end{itemize}
\noindent Default parameters were used. 

\noindent The code used to run the experiments of the paper was derived from the SAMS repository found at {\emph{https://github.com/RRavier/SAMS}}, in which the registration method is an implementation of EOP-Proj written by the author of this work. A separate repository, which {\emph{https://github.com/RRavier/EOP-Experiments}}, will contain a more concise version focused on replicating the experiments found in the paper.
\end{document}